\keywords{SMT solving, linear real arithmetic, Fourier-Motzkin elimination, simplex} %
\Crefname{exa}{Example}{Examples}
\Crefname{thm}{Theorem}{Theorems}
\Crefname{lem}{Lemma}{Lemmas}
\Crefname{defi}{Definition}{Definitions}
\newcommand{\vect}[1]{\bm{#1}}
\newcommand{\minus}{\text{-}}
\newcommand{\rvect}[2]{\vect{#1}_{#2,{\minus}}}
\newcommand{\cvect}[2]{\vect{#1}_{{\minus},#2}}
\newcommand{\matr}[1]{\capitalisewords{#1}}
\newcommand{\sys}[2]{\matr{#1}\vect{x} \leq \vect{#2}}
\newcommand{\submatr}[2]{\matr{#1}\text{\smaller{$[#2]$}}}
\newcommand{\subvect}[2]{\vect{#1}\text{\smaller{$[#2]$}}}
\newcommand{\subsys}[3]{\submatr{#1}{#3}\vect{x} \leq \subvect{#2}{#3}}
\newcommand{\subsyseq}[3]{\submatr{#1}{#3}\vect{x} = \subvect{#2}{#3}}
\newcommand{\bnd}[2]{\textit{bnd}_{#1}(#2)}
\newcommand{\lbs}[2]{I_{#2}^{-}(\matr{#1})}
\newcommand{\ubs}[2]{I_{#2}^{+}(\matr{#1})}
\newcommand{\nbs}[2]{I_{#2}^{0}(\matr{#1})}
\newcommand{\sbs}[2]{I_{#2}^{*}(\matr{#1})}
\newcommand{\nats}{\mathbb{N}}
\newcommand{\reals}{\mathbb{R}}
\newcommand{\rationals}{\mathbb{Q}}
\newcommand{\solset}[1]{\textit{sol}(#1)}
\newcommand{\rank}[1]{\textit{rank}(#1)}
\newcommand{\fmpproj}[3]{P_{#2,#3}(#1)}
\newcommand{\proj}[2]{{#1}|_{#2}}
\newcommand{\sat}[0]{\textit{SAT}}
\newcommand{\unsat}[0]{\textit{UNSAT}}
\newcommand{\partialunsat}[0]{\textit{PARTIAL-UNSAT}}
\newcommand\fmplexelim[0]{\textnormal{\texttt{FMP}}}
\newcommand{\dom}{\textit{dom}}
\newcommand{\vs}{/\!\!/}
\newcommand{\chngd}[1]{\setlength{\fboxrule}{1.5pt}\fcolorbox{black!25}{black!25}{#1}}
\newcommand{\chng}[1]{\setlength{\fboxrule}{1.5pt}\fcolorbox{black!25}{white}{#1}}
\newcommand{\boxchngd}[2]{\tikz[remember picture,overlay]{\node[yshift=7pt,fill=black!25,fit={(pic cs:A#1)($(pic cs:B#1)+(#2\linewidth,.5\baselineskip)$)}] {};}}
\newcommand{\boxchngdstart}[1]{\tikzmark{A#1}}
\newcommand{\boxchngdend}[1]{\tikzmark{B#1}}
\newcommand{\refalgobase}{Algorithm~\hyperref[algo:combined-base]{2a} }
\newcommand{\refalgobounds}{Algorithm~\hyperref[algo:leaving-out-bounds]{2b} }
\newcommand{\refalgobacktracking}{Algorithm~\hyperref[algo:backtracking]{2c} }
\begin{document}

\title[FMplex: Exploring a Bridge between Fourier-Motzkin and Simplex]{FMplex: Exploring a Bridge\texorpdfstring{\\}{} between Fourier-Motzkin and Simplex}

\thanks{%
	Jasper Nalbach and Valentin Promies were supported by the Deutsche Forschungsgemeinschaft (DFG) as part of AB 461/9-1 \emph{SMT-ART}.
	Jasper Nalbach was also supported by the DFG RTG 2236 UnRAVeL
}

\author[V.~Promies]{Valentin Promies\lmcsorcid{0000-0002-3086-9976}}[a]
\author[J.~Nalbach]{Jasper Nalbach\lmcsorcid{0000-0002-2641-1380}}[a]
\author[E.~Ábrahám]{Erika Ábrahám\lmcsorcid{0000-0002-5647-6134}}[a]
\author[P.~Kobialka]{Paul Kobialka\lmcsorcid{0000-0002-0635-1915}}[b]

\address{RWTH Aachen University, Germany}
\email{nalbach@cs.rwth-aachen.de, promies@cs.rwth-aachen.de, abraham@cs.rwth-aachen.de}
\address{University of Oslo, Norway}
\email{paulkobi@ifi.uio.no}

\begin{abstract}
  In this paper we present a quantifier elimination method for conjunctions of  \emph{linear real arithmetic} constraints.
  Our algorithm is based on the \emph{Fourier-Motzkin variable elimination} procedure, but by case splitting we are able to reduce the worst-case complexity from doubly to singly exponential.
  The adaption of the procedure for SMT solving has strong correspondence to the \emph{simplex algorithm}, therefore we name it \emph{FMplex}.
  Besides the theoretical foundations, we provide an experimental evaluation in the context of SMT solving.
  This is an extended version of the authors' work previously published at the fourteenth International Symposium on Games, Automata, Logics, and Formal Verification (GandALF 2023).
\end{abstract}

\maketitle
\section{Introduction}
\emph{Linear real arithmetic (LRA)} is a powerful first-order theory with strong practical relevance.
We focus on checking the satisfiability of \emph{conjunctions} of LRA constraints, which is needed e.g. for solving quantifier-free LRA formulas using \emph{satisfiability modulo theories (SMT) solvers}.
This problem is known to be solvable in \emph{polynomial} worst-case complexity but, surprisingly, the \emph{ellipsoid} method \cite{KHACHIYAN198053} proposed in 1980 by Khachiyan is still the only available algorithm that implements this bound.
However, this method is seldom used in practice due to its high average-case effort.
Instead, most approaches employ the \emph{simplex} algorithm introduced by Dantzig in 1947, which is quite efficient in practice, despite a \emph{singly exponential} worst case complexity.
A third available solution is the \emph{Fourier-Motzkin variable elimination (FM)} method, proposed in 1827 by Fourier \cite{fourier1827analyse} and re-discovered in 1936 by Motzkin \cite{motzkin1936beitrage}.
In contrast to the other two approaches, FM admits quantifier elimination, but it has a \emph{doubly exponential} worst case complexity, even though there have been various efforts to improve its efficiency by detecting and avoiding redundant computations (e.g. \cite{imbert1993fourier,JingComplexityFME}).

In recent work \cite{Nalbach_2023}, we introduced a novel method, which is derived from the FM method, but which turns out to have striking resemblance to the simplex algorithm.
This yields interesting theoretical insights into the relation of the two established methods and the nature of the problem itself.
Here, we provide an extended version of \cite{Nalbach_2023}, and our contributions naturally include those of the original paper:
\begin{itemize}
    \item The presentation of \emph{FMplex}, a new variable elimination method based on a divide-and-conquer approach. We show that it does not contain certain redundancies Fourier-Motzkin might generate and lowers the overall complexity from \emph{doubly} to \emph{singly} exponential.
    \item An adaptation of FMplex for SMT solving, including methods to prune the search tree based on structural observations.
    \item A theorem formalizing connections between FMplex and the simplex algorithm.
    \item An implementation of the SMT adaptation and its experimental evaluation.
\end{itemize}

\noindent We extend our previous work by the following novel contributions:
\begin{itemize}
	\item Additional and more detailed explanations, thereby improving readability and making some insights more explicit.
	In particular, we explain in \Cref{sec:varelim} how the Fourier-Motzkin method still has advantages when only few variables are eliminated, and we added \Cref{sec:strict} discussing how to handle strict constraints.
	\item Extensive and reworked examples illustrating more aspects and intricacies of the algorithms.
        In particular, \Cref{def:redundancy} and \Cref{example:redundancies} concerning redundancies, as well as \Cref{example:worstcase} demonstrating the worst-case complexity are new. Moreover, \Cref{example:fmplex-elim-procedure,example:nonbasis,example:backtracking}  (Examples 2, 3, 4 in \cite{Nalbach_2023}) have been extended and enriched with illustrations.
	\item Full and rigorous proofs for the central \Cref{thm:global-conflict,thm:basis-nonbasis-correspondence,thm:unique-base-termination} (Theorems 6, 7, 10 in \cite{Nalbach_2023}). These proofs are not trivial, and they provide interesting insights into the problem's nature.
	\item We improved readability of some notation in \Cref{subsec:avoiding-redundancies} as well as some notation for matrices.
\end{itemize}
After recalling necessary preliminaries in \Cref{sec:preliminaries}, we introduce our FMplex method first for variable elimination in \Cref{sec:varelim} and then for SMT solving in \Cref{sec:sat}.
We present related work and compare FMplex with other methods, first qualitatively in \Cref{sec:othermethods}, and then experimentally in \Cref{sec:experiments}.
We discuss future work and conclude the paper in \Cref{sec:conclusion}.

\section{Preliminaries}\label{sec:preliminaries}
Let $\reals$, $\rationals$ and $\nats$ denote the set of real, rational respectively natural ($0\notin\nats$) numbers.
For $k \in \nats$ we define $[k]:=\{1,\ldots,k\}$.
Throughout this paper, we fix $n \in \nats$, a set $X = \{ x_1,\ldots,x_n \}$ and a corresponding vector $\vect{x} = (x_1,\ldots,x_n)^T$ of $\reals$-valued variables.

\paragraph{Matrices}
We use bold lower-case letters (e.g.~$\vect{f}$) to denote vectors and upper case letters (e.g.~$\matr{A}$) for matrices.
For any $m \in \nats$ let $\matr{e} \in \rationals^{m \times m}$ be the identity matrix, and let $\vect{0} = (0\ \cdots\ 0)^T \in\rationals^{m\times 1}$ be the zero vector. The dimensions of $E$ and $\vect{0}$ will be clear from the context.
The $i$-th component of $\vect{f}\in \rationals^{m\times 1}\cup\rationals^{1\times m}$ is denoted by $f_i$ and the component-wise comparison to zero by $\vect{f} \geq 0$.

For $\matr{a} \in \rationals^{m \times n}$, $\rvect{a}{i} \in \rationals^{1 \times n}$ and $\cvect{a}{i} \in \rationals^{m \times 1}$ denote the $i$-th row respectively column vector of $\matr{a}$.
Furthermore, $\submatr{a}{I}$ denotes the sub-matrix of $\matr{a}$ containing only the rows with indices from some $I\subseteq [m]$.
That is, the rows of $\submatr{a}{I} \in \rationals^{|I|\times n}$ are exactly the $\rvect{a}{i}$ with $i \in I$.

For $\vect{f} \in \rationals^{1\times m}$, $\vect{f}\matr{a}$ is a \emph{linear combination} of the rows $i \in [m]$ of $\matr{a}$ with $f_i \neq 0$.
We call $\matr{a}$ \emph{linearly independent} if none of its rows is a linear combination of its other rows, and \emph{linearly dependent} otherwise.
The \emph{rank of \matr{a}} $\rank{\matr{a}}$ is the maximal size of an $I \subseteq [m]$ so that $\submatr{a}{I}$ is linearly independent.

\paragraph{Linear Constraints}
Let $\vect{a} = (a_1, \ldots, a_n) \in \rationals^{1\times n}$, $b \in \rationals$ and $\sim \in \{ =,\leq,<,\neq \}$ a \emph{relation symbol}.
We call $\vect{a}\vect{x} := (a_1x_1 + \ldots + a_nx_n)$ a \emph{linear term} and $\vect{a}\vect{x} \sim b$ a \emph{linear constraint}, which is \emph{weak} if $\sim \in \{=,\leq\}$  and \emph{strict} otherwise.
A \emph{system of linear constraints}, or short a \emph{system}, is a non-empty finite set of linear constraints.
In most parts of this paper, we only consider constraints of the form $\vect{a}\vect{x} \leq b$.
We can write every system $C=\{\rvect{a}{i}\;\vect{x} \leq b_i \mid i \in [m]\}$ of such constraints in \emph{matrix representation} $\sys{a}{b}$ with suitable $\matr{A} \in \rationals^{m \times n}$ and $\vect{b} \in \rationals^{m\times 1}$.
Conversely, every row $\rvect{a}{i}\; \vect{x} \leq b_i, \ i \in [m]$ of $\sys{A}{b}$ is a linear constraint.
Thus, the representations are mostly interchangeable; however, the matrix representation allows identical rows in contrast to the set notation.
As the latter will play a role later on, we will stick to the matrix representation.

\paragraph{Variable Assignments}
An \emph{assignment} is a function $\alpha: Y \to \reals$ with domain $\dom(\alpha)=Y\subseteq X$.
For an assignment $\alpha$, a variable $x_i$ and some $r \in \reals$, the \emph{extension} $\alpha[x_i \mapsto r]$ is the assignment with domain $\dom(\alpha)\cup\{x_i\}$ such that $\alpha[x_i \mapsto r](x_j)=\alpha(x_j)$ for all $x_j\in \dom(\alpha)\setminus\{x_i\}$ and $\alpha[x_i \mapsto r](x_i)=r$.
For $Z\subseteq Y$, the \emph{restriction} $\proj{\alpha}{Z}$ is the assignment with domain $Z$ such that $\proj{\alpha}{Z}(x_i)=\alpha(x_i)$ for all $x_i\in Z$.
We extend these notations to sets of assignments accordingly.

The standard \emph{evaluation} of a linear term $t$ under $\alpha$ is written $\alpha(t)$.
We say that $\alpha$ \emph{satisfies} (or is a solution of) a constraint $c = (\vect{a}\vect{x} \sim b)$ if $\alpha(\vect{a}\vect{x}) \sim b$ holds, and denote this fact by $\alpha \models c$.
All solutions of $c$ build its \emph{solution set} $\solset{c}$.
Similarly, $\alpha\models (\matr{a} \vect{x} \leq \vect{b})$ denotes that $\alpha$ is a common solution of all linear constraints in the system $\matr{a} \vect{x} \leq \vect{b}$.
A system is \emph{satisfiable} if it has a common solution, and \emph{unsatisfiable} otherwise.
Note that each satisfiable system has also a rational-valued solution.

\vspace{0.75em}
\noindent We will make use of the following two well-known results.

\begin{thm}[Farkas' Lemma \cite{farkas1902theorie}] \label{thm:farkas}
    Let $\matr{a} \in \rationals^{m \times n}$ and $\vect{b} \in \rationals^{m\times 1}$.
	Then the system $\matr{a} \vect{x} \leq \vect{b}$ is satisfiable if and only if for all $\vect{f} \in \rationals^{1\times m}$ with $\vect{f} \geq 0$ and $\vect{f} \matr{a} = \vect{0}$ it holds $\vect{f} \vect{b} \geq 0$.
\end{thm}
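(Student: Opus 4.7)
The plan is to prove the two directions separately, with the ``only if'' direction being essentially a one-line calculation and the ``if'' direction following by induction on the number of variables using Fourier-Motzkin elimination, which is quite natural given the context of the paper.

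For the easy direction ($\Rightarrow$), I would take any solution $\vect{x}_0$ of $\matr{A}\vect{x}\leq \vect{b}$ and any $\vect{f}\geq\vect{0}$ with $\vect{f}\matr{A}=\vect{0}$. Multiplying the inequality $\matr{A}\vect{x}_0\leq \vect{b}$ on the left by the non-negative row vector $\vect{f}$ preserves the inequality, giving $(\vect{f}\matr{A})\vect{x}_0 \leq \vect{f}\vect{b}$. The left-hand side is $\vect{0}\vect{x}_0 = 0$, so $\vect{f}\vect{b}\geq 0$, as required.

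For the hard direction ($\Leftarrow$), I would prove the contrapositive by induction on $n$: if $\matr{A}\vect{x}\leq\vect{b}$ is unsatisfiable, then there exists $\vect{f}\geq\vect{0}$ with $\vect{f}\matr{A}=\vect{0}$ and $\vect{f}\vect{b}<0$. For the base case $n=0$, the system is of the form $0 \leq b_i$ for various $i$; unsatisfiability means some $b_i<0$, and the standard unit row vector $\vect{f}=\rvect{e}{i}$ witnesses the claim. For the inductive step, I would apply a single step of Fourier-Motzkin elimination to remove $x_n$: the key structural property of FM is that the resulting system $\matr{A}'\vect{x}'\leq\vect{b}'$ on the first $n-1$ variables can be written as $[\matr{A}'\mid\vect{b}'] = \matr{M}[\matr{A}\mid\vect{b}]$ for some non-negative matrix $\matr{M}$, where additionally $\matr{M}\cvect{A}{n}=\vect{0}$ (this is what the elimination step enforces). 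Since FM is known to be sound and complete for projection, the reduced system is again unsatisfiable. Applying the induction hypothesis yields $\vect{f}'\geq\vect{0}$ with $\vect{f}'\matr{A}'=\vect{0}$ and $\vect{f}'\vect{b}'<0$; then setting $\vect{f}:=\vect{f}'\matr{M}$ gives $\vect{f}\geq\vect{0}$ (product of non-negative entries), $\vect{f}\vect{b}=\vect{f}'\vect{b}'<0$, and $\vect{f}\matr{A}=\vect{f}'\matr{M}\matr{A}$, whose columns $1,\ldots,n-1$ equal $\vect{f}'\matr{A}'=\vect{0}$ and whose $n$-th column equals $\vect{f}'(\matr{M}\cvect{A}{n})=0$.

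I expect the main obstacle to be the bookkeeping around the elimination step, in particular formalizing the non-negative combination matrix $\matr{M}$ carefully across all three cases that can arise (both lower and upper bounds on $x_n$, only one-sided bounds, or no constraint involving $x_n$). The existence of such an $\matr{M}$ is the content of FM elimination's correctness, so if the paper chose to reproduce this proof, it would essentially require first presenting FM; since Farkas is stated here as a prerequisite, citing \cite{farkas1902theorie} and omitting the proof seems the most reasonable choice. A secondary subtlety is the base case: one must agree on what $\matr{A}\in\rationals^{m\times 0}$ means, but with the natural convention that $\matr{A}\vect{x}=\vect{0}$ the argument goes through.
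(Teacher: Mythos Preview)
Your proof is correct, and you have in fact anticipated exactly what the paper does: Farkas' Lemma is stated in the preliminaries with a citation to \cite{farkas1902theorie} and no proof is given. It is used purely as a known background result, so there is nothing in the paper to compare your argument against.

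For completeness, your argument is sound. The forward direction is immediate, and the induction via Fourier--Motzkin for the contrapositive is a standard route that avoids any circularity, since the correctness of a single FM elimination step (equisatisfiability of the projected system) is established directly from the density of the reals and does not itself rely on Farkas. Your handling of the transformation matrix $\matr{M}$ is exactly in line with how the paper formalizes FM (all entries of the projection matrix are non-negative, and the eliminated column becomes zero), so the bookkeeping you flag as a potential obstacle is already laid out in the paper's Definition of Fourier--Motzkin Variable Elimination. The base case $n=0$ with the convention $\matr{A}\vect{x}=\vect{0}$ is unproblematic.
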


\begin{thm}[Fundamental Theorem of Linear Programming, as in \cite{luenberger1984linear}]\label{thm:linearprogramming}
	Assume $\matr{a} \in \rationals^{m \times n}$ and $\vect{b} \in \rationals^{m \times 1}$.
	Then $\sys{a}{b}$ is satisfiable if and only if there exists a subset $I \subseteq [m]$ such that $\submatr{a}{I}$ is linearly independent, $|I| = \rank{\matr{a}}$, and there exists an assignment $\alpha: X \to \reals$ with $\alpha \models (\sys{a}{b})$ and $\alpha \models (\submatr{a}{I}\vect{x} = \subvect{b}{I})$.
\end{thm}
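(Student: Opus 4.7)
The plan is to prove the two directions separately. The reverse direction is immediate: the right-hand side already asserts the existence of an $\alpha$ with $\alpha \models (\sys{A}{b})$, so it is logically strictly stronger than mere satisfiability.

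For the forward direction, I would start from an arbitrary solution $\alpha_0$ and iteratively \emph{tighten} additional constraints until the currently tight rows span the full row space of $\matr{A}$. Let $I_0 = \{ i \in [m] \mid \rvect{A}{i}\,\alpha_0(\vect{x}) = b_i \}$ collect the indices of constraints tight at $\alpha_0$. If already $\rank{\submatr{A}{I_0}} = \rank{\matr{A}}$, I pick any maximal linearly independent $I \subseteq I_0$ and return $\alpha := \alpha_0$; such $I$ has size $\rank{\matr{A}}$ by definition, and the required equalities hold at $\alpha_0$.

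Otherwise the kernel of $\submatr{A}{I_0}$ strictly contains the kernel of $\matr{A}$ by a dimension count, so there exists a direction $\vect{d}$ with $\submatr{A}{I_0}\vect{d} = \vect{0}$ but $\matr{A}\vect{d} \neq \vect{0}$. Moving $\alpha_0$ along $\vect{d}$ keeps every currently tight row tight, while some non-tight row $j$ must satisfy $\rvect{A}{j}\vect{d} \neq 0$ (otherwise $\vect{d}$ would lie in the kernel of $\matr{A}$). Walking in the sign of $\vect{d}$ for which such a $j$ is encountered first, I stop at the first non-tight row that becomes tight; this yields a new solution $\alpha_1$ whose tight-row index set $I_1$ strictly contains $I_0$. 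Crucially, the newly tight row is linearly independent of $\submatr{A}{I_0}$ because it evaluates non-trivially on $\vect{d}$ while all rows of $\submatr{A}{I_0}$ vanish on $\vect{d}$, so $\rank{\submatr{A}{I_1}} > \rank{\submatr{A}{I_0}}$.

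Iterating this procedure at most $\rank{\matr{A}}$ times terminates at some $\alpha$ with tight index set $I^*$ satisfying $\rank{\submatr{A}{I^*}} = \rank{\matr{A}}$, and any maximal linearly independent $I \subseteq I^*$ furnishes the desired witness. I expect the main obstacle to be making the ``walk'' step fully precise: justifying the existence of a finite first hit, picking the correct sign of $\vect{d}$ so that the walk stays inside $\solset{\sys{A}{b}}$, and verifying the strict rank increase at each iteration. Once these purely linear-algebraic facts are spelled out, termination is immediate from the boundedness of $\rank{\submatr{A}{I_k}}$ by $\rank{\matr{A}}$.
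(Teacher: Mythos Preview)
The paper does not prove this theorem; it is stated as one of two ``well-known results'' and cited from \cite{luenberger1984linear} without proof. Your sketch is correct and is essentially the standard textbook argument: iteratively move a feasible point along a direction $\vect{d}$ in the kernel of the currently tight submatrix until a new constraint becomes tight, thereby strictly increasing the rank of the tight rows. The points you flag as obstacles---existence of a finite first hit in at least one sign of $\vect{d}$, feasibility along the segment, and the strict rank increase---are all routine linear-algebra verifications, and your outline already contains the key observation for the last one (the newly tight row has $\rvect{a}{j}\vect{d}\neq 0$ while every previously tight row vanishes on $\vect{d}$).
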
 

\subsection{Fourier-Motzkin Variable Elimination}
To eliminate any variable $x_j \in X$ from a system $\sys{a}{b}$, the \emph{Fourier-Motzkin} (FM) \cite{fourier1827analyse,motzkin1936beitrage} method computes another system\footnote[1]{
	Note that $\matr{a}$ and $\matr{a'}$ have the same number of columns, but the column $\cvect{a}{j}'$ corresponding to $x_j$ is zero. This presentation makes our later steps easier to formulate in terms of matrix multiplications.
} $\sys{a'}{b'}$ with $\cvect{a'}{j}=0$ and such that an assignment $\alpha$ is a solution of $\sys{a'}{b'}$ if and only if there is $r\in\rationals$ so that $\alpha[x_j\mapsto r]$ is a solution of $\sys{a}{b}$.
Graphically, the solution set of $\sys{a'}{b'}$ is the projection of the solutions of $\sys{a}{b}$ onto $X \setminus \{ x_j \}$.

The idea of the FM method is as follows.
For each $i \in [m]$ with $a_{i,j} \neq 0$, the constraint $\rvect{a}{i}\;\vect{x} \leq b_i$ can be rewritten as either a \emph{lower bound} or an \emph{upper bound} on $x_j$:
\begin{align*}
    x_j \sim \frac{1}{a_{i,j}}\big(b_i - \sum_{k \in [n] \setminus \{j\}} a_{i,k} x_k\big), \text{ where } \sim = \left.\begin{cases}
		\geq & \text{ if } a_{i,j} < 0\\
		\leq & \text{ if } a_{i,j} > 0
	\end{cases}\right.
\end{align*}

\noindent The term defining the bound is the same in both cases, and we denote it as $\bnd{j}{\rvect{a}{i}\;\vect{x} \leq b_i}$.\vspace{-1em}
\begin{defi}
	For $\matr{a} \in \rationals^{m \times n}$, we define the following three index sets:
    \begin{align*}
        \lbs{a}{j} := \{i \in [m] \mid a_{i,j} < 0\}\, \quad \ubs{a}{j} := \{i \in [m] \mid a_{i,j} > 0\}\, \quad \nbs{a}{j} := \{i \in [m] \mid a_{i,j} = 0\}
    \end{align*}
\end{defi}
\noindent $\lbs{a}{j}$, $\ubs{a}{j}$ and $\nbs{a}{j}$ indicate the rows of $\sys{a}{b}$ which induce lower bounds, upper bounds and no bounds on $x_j$, respectively.

Due to the density of the reals, there exists a value for $x_j$ that satisfies all bounds if and only if each lower bound is less than or equal to each upper bound.
However, in general the involved bounds are symbolic, and thus their values depend on the values of other variables.
Therefore, we cannot directly check whether the lower bounds are equal or below the upper bounds, but instead we express this by a new constraint set $\sys{a'}{b'}$, which is defined by the following:
\[
    \{\bnd{j}{\rvect{a}{\ell}\;\vect{x} \leq b_{\ell}} \leq \bnd{j}{\rvect{a}{u}\;\vect{x} \leq b_{u}} \mid (\ell,u) \in \lbs{a}{j}\times\ubs{a}{j}\}\  \cup\  \{\rvect{a}{i}\;\vect{x} \leq b_i \mid i \in \nbs{a}{j}\}
\]

\noindent Formulating this idea with respect to the matrix representation, the FM method applies the following transformation, using matrix multiplications. Recall that $\matr{e}$ is the identity matrix, and thus $\rvect{e}{i}$ is the $i$-th unit vector.

\begin{defi}[Fourier-Motzkin Variable Elimination]
	Let $\matr{a} \in \rationals^{m \times n}$, $\vect{b} \in \rationals^{m\times 1}$, and $j \in [n]$.
	Let further
    $m'=|\lbs{a}{j}|\cdot |\ubs{a}{j}|+|\nbs{a}{j}|$ and $\matr{f} \in \rationals^{m' \times m}$ be a matrix consisting of exactly the following rows:
    \begin{align*}
        a_{u,j}^{-1} \cdot \rvect{e}{u} - a_{\ell,j}^{-1} \cdot \rvect{e}{\ell}\ &\text{ for every pair }\ (\ell,u) \in \lbs{a}{j}\times\ubs{a}{j}\ \qquad \text{ and }\\
        \rvect{e}{i}\ &\text{ for every }\ i \in \nbs{a}{j}.
    \end{align*}

	\noindent Then the \emph{Fourier-Motzkin variable elimination} of $x_j$ from the system $\sys{a}{b}$ is defined as the system $\matr{f}\matr{a}\vect{x}\leq \matr{f} \vect{b}$.
\end{defi}

The FM method can also be used to check the consistency (i.e. satisfiability) of a system $\sys{a}{b}$, by successively eliminating all variables $x_n, \ldots, x_1$.
This yields intermediate systems $\matr{a}^{(n-1)}\vect{x}\leq \vect{b}^{(n-1)}, \ldots, \matr{a}^{(0)}\vect{x}\leq \vect{b}^{(0)}$, where the last one does not contain any variable, i.e. $\matr{A}^{(0)} = 0$.
Note that all entries of the transformation matrix $\matr{f}$ in the definition above are positive, and thus for any $k \in \{0,\ldots,n-1\}$ and any row $i'$ in $\matr{a}^{(k)}\vect{x}\leq \vect{b}^{(k)}$,
there exists $\vect{f} \in \rationals^{1\times m}$ such that $\vect{f} \geq 0$, $\vect{f}\matr{A} = \rvect{a}{i'}^{(k)}$ and $\vect{f}\vect{b} = b^{(k)}_{i'}$.%
We call this kind of linear combinations \emph{conical combinations}.
Now we know by Farkas' Lemma (\Cref{thm:farkas}) that, if $\matr{a}^{(0)}\vect{x}\leq \vect{b}^{(0)}$ is unsatisfiable, then so is $\sys{a}{b}$.
On the other hand, if $\matr{a}^{(0)}\vect{x}\leq \vect{b}^{(0)}$ is satisfiable, then it is satisfied by the empty assignment, which can be extended successively to a model of $\matr{a}^{(1)}\vect{x}\leq \vect{b}^{(1)},\ldots, \matr{a}^{(n-1)}\vect{x}\leq \vect{b}^{(n-1)}$ and $\sys{a}{b}$.

A major drawback of the Fourier-Motzkin variable elimination is its doubly exponential worst-case complexity in time and space w.r.t. the number of eliminated variables.
Moreover, many of the generated rows are redundant because they are conical combinations of the other rows, i.e. they could be omitted without changing the solution set of the system.

\begin{defi}[Redundancy]\label{def:redundancy}
	Let $\matr{A} \in \rationals^{m \times n}$ and $\vect{b} \in \rationals^{m\times 1}$. Let further $i \in [m]$ and $I := [m] \setminus \{i\}$.
	We call $\rvect{a}{i}\vect{x} \leq b_i$ \emph{redundant} in $\sys{a}{b}$, if $\solset{\sys{a}{b}} = \solset{\subsys{a}{b}{I}}$.
\end{defi}

Redundancies might already be contained in the input system, or they are introduced by the projection operation.
While removing all redundancies is expensive, there are efficient methods, e.g. Imbert's acceleration theorems \cite{imbert1990redundant,imbert1993fourier,JingComplexityFME}, for removing some redundancies of the latter type.
The following lemma describes this kind of redundancy. 

\begin{lem}[Redundancy by Construction]\label{def:redundancyconstruction}
  Let $\matr{A} \in \rationals^{m \times n}, \vect{b} \in \rationals^{m\times 1}$ and $\matr{F} \in \rationals^{m' \times m}$.
  Let furthermore $\matr{A'}=\matr{F}\matr{A}$, $\vect{b'}=\matr{F}\vect{b}$ and $i \in [m']$.
  If there exists $\vect{r} \in \rationals^{1\times m'}$ with $\vect{r} \geq 0$, $r_i = 0$ and $\vect{r}\matr{F} = \rvect{f}{i}$ (i.e. the $i$-th row of $\sys{a'}{b'}$ is a conical combination $\vect{r}\matr{F}\matr{A}\vect{x} \leq \vect{r}\matr{F}\vect{b}$ ($\vect{r}\matr{A'}\vect{x} \leq \vect{r}\vect{b'}$) of the other rows), then that row is redundant in $\sys{a'}{b'}$.
\end{lem}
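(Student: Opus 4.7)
\medskip

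\noindent \textbf{Proof plan.}
The plan is to unfold \Cref{def:redundancy} and show the nontrivial inclusion of solution sets. With $I := [m'] \setminus \{i\}$, one direction, $\solset{\sys{a'}{b'}} \subseteq \solset{\subsys{a'}{b'}{I}}$, is immediate since dropping constraints can only enlarge the solution set. So the work is to prove the reverse inclusion: any assignment $\alpha$ satisfying all rows in $I$ must also satisfy the $i$-th row $\rvect{a'}{i}\vect{x} \leq b'_i$.

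The key observation is that the hypothesis $\vect{r}\matr{F} = \rvect{f}{i}$ lets us rewrite row $i$ of the transformed system as a linear combination of all rows: multiplying on the right by $\matr{A}$ and by $\vect{b}$ gives
\[
\vect{r}\matr{A'} \;=\; \vect{r}\matr{F}\matr{A} \;=\; \rvect{f}{i}\matr{A} \;=\; \rvect{a'}{i}, \qquad \vect{r}\vect{b'} \;=\; \vect{r}\matr{F}\vect{b} \;=\; \rvect{f}{i}\vect{b} \;=\; b'_i.
\]
Because $r_i = 0$, this combination actually only involves the rows indexed by $I$, i.e.\ $\rvect{a'}{i} = \sum_{k \in I} r_k \rvect{a'}{k}$ and $b'_i = \sum_{k \in I} r_k b'_k$.

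From here the proof is a short chain of inequalities. Given $\alpha$ with $\alpha \models \subsys{a'}{b'}{I}$, I would evaluate
\[
\alpha(\rvect{a'}{i}\vect{x}) \;=\; \sum_{k \in I} r_k \cdot \alpha(\rvect{a'}{k}\vect{x}) \;\leq\; \sum_{k \in I} r_k \cdot b'_k \;=\; b'_i,
\]
where the inequality step uses $r_k \geq 0$ together with $\alpha(\rvect{a'}{k}\vect{x}) \leq b'_k$ for each $k \in I$. This gives $\alpha \models \rvect{a'}{i}\vect{x} \leq b'_i$, completing the reverse inclusion and hence the redundancy claim.

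There is no real obstacle here; the only subtle point is the careful use of $r_i = 0$, which is exactly what prevents circularity (one cannot justify satisfaction of the $i$-th constraint by appealing to itself). The lemma is essentially the contrapositive direction of Farkas' Lemma applied locally to a single row, packaged in terms of the transformation matrix $\matr{F}$ so that it can later be tracked symbolically during elimination.
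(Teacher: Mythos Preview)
Your argument is correct. The paper states this lemma without proof (it proceeds directly to \Cref{example:redundancies}), so there is nothing to compare against; your direct unfolding of \Cref{def:redundancy} together with the non-negative combination argument is exactly the natural justification, and your remark on the role of $r_i = 0$ is apt.
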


\begin{exa}\label{example:redundancies}
	Consider the system on the left of \Cref{fig:redundancies} on which we applied Fourier-Motzkin to eliminate $x_1$ and $x_2$ to obtain the system on the right, whose first row is redundant by construction. The matrix $F$ from \Cref{def:redundancyconstruction} is encoded in the row labels.
    Note that we omit the eliminated variables and corresponding zero-columns to save space.
\end{exa}
\begin{figure}[h]
	\begin{center}
	\begin{tikzpicture}
		\node[draw, align=center] (root) at (0,0) {$ $\\ \vspace{-1.2em}
			$\begin{blockarray}{rccc}
				\begin{block}{r[ccc]}
					c_1 &  1 & 1 & 1 \bigstrut[t] \\
					c_2 &  1 & -1 & 1 \\
					c_3 &  -1 & 1 & 1 \\
					c_4 &  -1 & -1 &  1 \bigstrut[b]\\
				\end{block}
			\end{blockarray}
			\cdot
			\left[ {\begin{array}{cc}
				x_1 \\ x_2 \\ x_3
			\end{array} } \right] 
			\leq
			\begin{blockarray}{c}
				\begin{block}{[c]}
				0 \bigstrut[t] \\
				0 \\
				1 \\
				1 \bigstrut[b]\\
				\end{block}
			\end{blockarray}$
		};

		\node[draw, align=left] (A1) at (8,0) {$ $\\ \vspace{-1.2em}
			$\begin{blockarray}{rc}
				\begin{block}{r[c]}
					c_1+c_2+c_3+c_4 &  4 \bigstrut[t] \\
					c_1+c_4 &  2 \\
					c_2+c_3 &  2\bigstrut[b]\\
				\end{block}
			\end{blockarray}
			\cdot
			\left[ {\begin{array}{cc}
				x_3 
			\end{array} } \right] 
			\leq
			\begin{blockarray}{c}
				\begin{block}{[c]}
				2 \bigstrut[t] \\
				1 \\
				1 \bigstrut[b]\\
				\end{block}
			\end{blockarray}$
		};

		\draw[{Latex[length=2mm,width=2mm]}-] (A1) -- (root) node[pos=0.5,above=0.3cm, align=center] {elim. $x_1$ \\ and $x_2$};
	\end{tikzpicture}
	\end{center}
	\caption{Fourier-Motzkin elimination steps from \Cref{example:redundancies}.}\label{fig:redundancies}
\end{figure}

\section{FMplex as Variable Elimination Procedure}
\label{sec:varelim}

The FM method encodes that none of the lower bounds on some variable $x_j$ in a system $\sys{a}{b}$ is larger than any of its upper bounds.
In our \emph{FMplex} method, we do not consider all lower-upper bound combinations at once.
Instead, we \emph{split the problem into a set of sub-problems} by a case distinction either on \emph{which of the lower bounds are the largest} or, alternatively, on \emph{which of the upper bounds are the smallest}.
For the case distinction on lower bounds, we create one sub-problem for each lower bound on $x_j$, and in this sub-problem we consider solutions where that lower bound is maximal among all lower bounds, and at the same time not larger than any of the upper bounds.
The idea is that, if we find a solution for one of the sub-problems, it is also a solution of the original system.
On the other hand, any solution of $\sys{a}{b}$ will satisfy one of the sub-problems, since one of the lower bounds has to be maximal for that assignment.
The case distinction for upper bounds works analogously.

Asymptotically, these sub-problems are significantly smaller than the systems produced by FM, so that in total our approach produces \emph{at most exponentially} many constraints after iterated application, in contrast to the doubly exponential effort of the FM method.

Formally, if there are no upper or no lower bounds on $x_j$, then there is no need for case  splitting and we follow FM using $\exists x_j.\; \sys{a}{b} \equiv \subsys{a}{b}{\nbs{a}{j}}$.
Otherwise, for the sub-problem when designating $i \in \lbs{a}{j}$ as a largest lower bound, we encode that no other lower bound is larger than the bound induced by row $i$, and no upper bound is below this bound.
Using the set notation for systems, we obtain
\begin{align*}
  &\{ \bnd{j}{\rvect{a}{i'}\;\vect{x} \leq b_{i'}} \leq \bnd{j}{\rvect{a}{i}\;\vect{x} \leq b_i} \mid i' \in \lbs{a}{j} \setminus \{ i \} \} \\
	\cup\ & \{\bnd{j}{\rvect{a}{i}\;\vect{x} \leq b_i} \leq \bnd{j}{\rvect{a}{i'}\;\vect{x} \leq b_{i'}} \mid i' \in \ubs{a}{j}\} %
	\cup  \{\rvect{a}{i'}\;\vect{x} \leq b_{i'} \mid i' \in \nbs{a}{j}\}.
\end{align*}

\begin{exa} \label{example:fmplex-idea}
	We eliminate $x_2$ from the system $\sys{a}{b}$ consisting of the lower-bounding constraints $c_1$ and $c_2$, and the upper-bounding $c_3$ and $c_4$, specified below along with a graphical depiction.
	
	\begin{figure}[h]
		\begin{subfigure}[h]{0.48\textwidth}
			\centering
			\begin{align*}
				\begin{blockarray}{ccc}
					\begin{block}{c[cc]}
						c_1 & -1 & -1 \bigstrut[t] \\
						c_2 & 0 & -2 \\
						c_3 & -2 & 1 \\
						c_4 & 0 & 1 \bigstrut[b]\\
					\end{block}
				\end{blockarray}
				\cdot
				\left[ {\begin{array}{ccc}
					x_1 \\ x_2
				\end{array} } \right]
				\leq
				\left[ {\begin{array}{ccc}
					-4 \\ -2 \\ 1 \\ 5
				\end{array} } \right]
			\end{align*}
		\end{subfigure}\hfill
		\begin{subfigure}[h]{0.48\textwidth}
			\centering
			\begin{tikzpicture}[scale=0.5,
				lb/.style={
					very thick, color=blue100, opacity=1, shorten <= -5pt, shorten >= -5pt,
					postaction={draw,decorate,decoration={markings,mark=at position .4 with {\draw[->, shorten <= 0, shorten >= 0] (0,0) -- (0,-8pt);}}}}, 
					ub/.style={
						very thick, color=green100, opacity=1, shorten <= -5pt, shorten >= -5pt,
						postaction={draw,decorate,decoration={markings,mark=at position .6 with {\draw[->, shorten <= 0, shorten >= 0] (0,0) -- (0,8pt);}}}},
						]
						\draw[->] (0,0) -- (6.5,0) node[right] {$x_1$};
						\draw[->] (0,0) -- (0,6) node[below left] {$x_2$};
						\path[name path = c1] (0,4) -- (4,0);
						\path[name path = c2] (0,1) -- (6,1);
						\path[name path = c3] (0,1) -- (2.5,6);
						\path[name path = c4] (0,5) -- (6,5);
						\path[name path = c5]  (6,0) -- (6,6);
						\path [name intersections={of=c1 and c3, by={p13}}];
						\path [name intersections={of=c3 and c4, by={p34}}];
						\path [name intersections={of=c4 and c5, by={p45}}];
						\path [name intersections={of=c2 and c5, by={p25}}];
						\path [name intersections={of=c1 and c2, by={p12}}];
						\fill[color=gray!25] (p13) -- (p34) -- (p45) -- (p25) -- (p12) -- cycle;
						\draw[lb] (0,4) -- (4,0) node[left, pos=0, xshift=1pt] {$c_1$};
						\draw[lb] (0,1) -- (6,1) node[above right, yshift=-2pt] {$c_2$};
						
						\draw[ub] (0,1) -- (2.25,5.5) node[left, xshift=2pt, yshift=1pt] {$c_3$};
						\draw[ub] (0,5) -- (6,5) node[below right] {$c_4$};
						
						\draw[color=black75, thick, dashed] (3,0) -- (3,6);
					\end{tikzpicture}
				\end{subfigure}	
			\end{figure}
			
			\noindent The lower bounds $\lbs{a}{2} = \{1,2\}$ on $x_2$ are blue, the upper bounds $\ubs{a}{2} = \{3,4\}$ are green.
			The solution set is the gray area and the dashed line indicates the split into two sub-problems, namely the cases that $c_1$ resp. $c_2$ is a largest lower bound on $x_2$ and not larger than any upper bound on $x_2$.
			The encoding of the $c_1$-case is given by 
			\[\bnd{2}{c_2}\leq \bnd{2}{c_1}\quad\land\quad \bnd{2}{c_1} \leq \bnd{2}{c_3}\quad \land\quad \bnd{2}{c_1} \leq \bnd{2}{c_4},\]
			where the first constraint expresses that $c_1$ gives the greatest lower bound and the other two ensure that it does not exceed the upper bounds.
			In normal form, this formula evaluates to $(x_1 \leq 3) \land (-3x_1 \leq -3) \land (-x_1 \leq 1)$ satisfied by any $x_1 \in [1,3]$, on the left of the dashed line.
			The case for $c_2$ evaluates to $(-x_1 \leq -3) \land (-2x_1 \leq 0) \land (0 \leq 4)$ and is satisfiable on the right of the dashed line.
			The disjunction of the two formulas then defines exactly those values for $x_1$ which allow a solution of the initial system, i.e. $x_1 \in [1,3]$ or $x_1 \in [3,\infty)$.
		\end{exa}
		
		The construction for the case $i \in \ubs{a}{j}$ designating $i$ as smallest upper bound is analogous.
		Like for FM, the constructed constraints $\bnd{j}{\rvect{a}{i'}\;\vect{x} \leq b_{i'}} \leq \bnd{j}{\rvect{a}{i}\;\vect{x} \leq b_i}$ are equivalent to linear combinations of the system $\sys{a}{b}$.
		However, combinations of two lower bounds or two upper bounds use negative coefficients, meaning that these linear combinations are not necessarily \emph{conical}.
		Nevertheless, we can formulate the sub-problems with respect to the matrix representation, resulting in the following transformation:

\begin{defi}[Restricted Projection] \label{def:restricted-projection}
	Let $\matr{A} \in \rationals^{m \times n}$, $\vect{b} \in \rationals^{m\times 1}$ and $j \in [n]$.
	\begin{itemize}
		\item If $\lbs{a}{j}\not=\emptyset$ and $\ubs{a}{j}\not=\emptyset$, then for any $i \in \lbs{a}{j} \cup \ubs{a}{j}$ we fix $\matr{f} \in \rationals^{(m-1) \times m}$ arbitrarily but deterministically to consist of exactly the following rows:
		\begin{align*}
			\frac{1}{a_{i,j}} \cdot \rvect{e}{i} - \frac{1}{a_{i',j}} \cdot \rvect{e}{i'} &\text{ for every } i' \in \lbs{a}{j} \setminus \{i\},\\
			-\frac{1}{a_{i,j}} \cdot \rvect{e}{i} + \frac{1}{a_{i',j}} \cdot \rvect{e}{i'} &\text{ for every } i' \in \ubs{a}{j} \setminus \{i\},\qquad \text{ and }\qquad \rvect{e}{i'} \text{ for every } i' \in \nbs{a}{j}.
		\end{align*} 
		Then the \emph{restricted projection} $\fmpproj{\sys{a}{b}}{j}{i}$ of $x_j$ from the system $\sys{a}{b}$ with respect to the row $i$ is defined as the system $\matr{f}\matr{a}\vect{x}\leq \matr{f} \vect{b}$.
		We call $\matr{f}$ the \emph{projection matrix} corresponding to $\fmpproj{\sys{a}{b}}{j}{i}$.
		\item If $\lbs{a}{j}=\emptyset$ or $\ubs{a}{j}=\emptyset$, then we define the projection matrix $\matr{f} \in \rationals^{|\nbs{a}{j}| \times m}$ to have exactly one row $\rvect{e}{i'}$ for each $i' \in \nbs{a}{j}$, and define $\fmpproj{\sys{a}{b}}{j}{\bot}$ as $\matr{f}\matr{a}\vect{x}\leq \matr{f} \vect{b}$.
	\end{itemize}
\end{defi}

We call this a restricted projection because it defines exactly the part of the projection, where the respective bound is a greatest lower or least upper bound.
Crucially, the solutions of the restricted projections for all lower (or all upper) bounds of a variable exactly cover the projection of the entire solution set, as is demonstrated by the following lemma.

\begin{lem} \label{lma:restricted-projection}
	Let $\matr{A} \in \rationals^{m \times n}$, $\vect{b} \in \rationals^{m\times 1}$, $j \in [n]$.
	If $\lbs{a}{j}\neq \emptyset$ and $\ubs{a}{j}\neq \emptyset$, then
	\[
		\proj{\solset{\sys{a}{b}}}{X\setminus\{x_j\}} = \bigcup_{i \in \lbs{a}{j}} \solset{\fmpproj{\sys{a}{b}}{j}{i}} = \bigcup_{i \in \ubs{a}{j}} \solset{\fmpproj{\sys{a}{b}}{j}{i}}.
	\]
	Otherwise $(\lbs{a}{j}=\emptyset$ or $\ubs{a}{j}=\emptyset)$, it holds $\proj{\solset{\sys{a}{b}}}{X\setminus\{x_j\}} = \solset{\fmpproj{\sys{a}{b}}{j}{\bot}}$.
\end{lem}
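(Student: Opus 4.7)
My plan is to prove both inclusions of the first equality, treat the upper-bound version symmetrically, and handle the degenerate case separately.

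As a preliminary observation, I would unpack what each row of the projection matrix of \Cref{def:restricted-projection} represents as a constraint on $X \setminus \{x_j\}$. For $i \in \lbs{a}{j}$ and $i' \in \lbs{a}{j} \setminus \{i\}$, the row $\frac{1}{a_{i,j}}\rvect{e}{i} - \frac{1}{a_{i',j}}\rvect{e}{i'}$ produces the constraint $\bnd{j}{\rvect{a}{i'}\;\vect{x} \leq b_{i'}} \leq \bnd{j}{\rvect{a}{i}\;\vect{x} \leq b_i}$; for $i' \in \ubs{a}{j}$ the row $-\frac{1}{a_{i,j}}\rvect{e}{i} + \frac{1}{a_{i',j}}\rvect{e}{i'}$ gives $\bnd{j}{\rvect{a}{i}\;\vect{x} \leq b_i} \leq \bnd{j}{\rvect{a}{i'}\;\vect{x} \leq b_{i'}}$; and for $i' \in \nbs{a}{j}$ the row $\rvect{e}{i'}$ retains the original constraint. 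In every case the variable $x_j$ cancels (its coefficient becomes $0$), which is consistent with the fact that this is a projection. This algebraic identification is the only mildly fiddly part and is easily verified using the explicit formula for $\bnd{j}{\cdot}$ stated just before \Cref{def:restricted-projection}.

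For the inclusion $\bigcup_{i\in\lbs{a}{j}}\solset{\fmpproj{\sys{a}{b}}{j}{i}}\subseteq \proj{\solset{\sys{a}{b}}}{X\setminus\{x_j\}}$, take $\alpha'$ satisfying $\fmpproj{\sys{a}{b}}{j}{i}$ for some $i\in\lbs{a}{j}$ and define $r:=\bnd{j}{\rvect{a}{i}\;\vect{x}\leq b_i}(\alpha')$ and $\alpha:=\alpha'[x_j\mapsto r]$. The row for every $i'\in\lbs{a}{j}\setminus\{i\}$ forces $\bnd{j}{\cdot}_{i'}(\alpha')\leq r$, so $\alpha$ satisfies the corresponding lower-bounding constraint; the row for $i$ itself becomes an equality by construction; the row for every $i'\in\ubs{a}{j}$ forces $r\leq \bnd{j}{\cdot}_{i'}(\alpha')$, so the upper-bounding constraints hold; and the rows for $i'\in\nbs{a}{j}$ are transferred verbatim, hence are satisfied by $\alpha$. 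Thus $\alpha\models(\sys{a}{b})$ and $\alpha'=\proj{\alpha}{X\setminus\{x_j\}}$.

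For the converse inclusion, pick $\alpha' \in \proj{\solset{\sys{a}{b}}}{X\setminus\{x_j\}}$, witnessed by some $r\in\reals$ with $\alpha:=\alpha'[x_j\mapsto r]\models(\sys{a}{b})$. Since $\lbs{a}{j}$ is finite and nonempty, choose $i\in\lbs{a}{j}$ maximizing $\bnd{j}{\cdot}_{i'}(\alpha')$ over $i'\in\lbs{a}{j}$. Then the lower-lower rows hold by maximality, the no-bound rows hold trivially, and for every $i'\in\ubs{a}{j}$ we have $\bnd{j}{\cdot}_i(\alpha')\leq r\leq \bnd{j}{\cdot}_{i'}(\alpha')$ because $\alpha$ satisfies both the $i$-th and the $i'$-th original constraints. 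Hence $\alpha'\in\solset{\fmpproj{\sys{a}{b}}{j}{i}}$. The analogous statement for upper bounds is obtained by choosing $i$ to \emph{minimize} $\bnd{j}{\cdot}_{i'}(\alpha')$ over $i'\in\ubs{a}{j}$; the argument is entirely symmetric.

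Finally, when $\lbs{a}{j}=\emptyset$ or $\ubs{a}{j}=\emptyset$, the system $\fmpproj{\sys{a}{b}}{j}{\bot}$ is just the subsystem $\subsys{a}{b}{\nbs{a}{j}}$. The inclusion $\proj{\solset{\sys{a}{b}}}{X\setminus\{x_j\}}\subseteq \solset{\fmpproj{\sys{a}{b}}{j}{\bot}}$ is immediate since these constraints are already contained in the original system and do not mention $x_j$. For the reverse direction, given $\alpha'$ satisfying the no-bound constraints, simply pick $r$ small enough (resp. large enough) to satisfy all the existing upper (resp. lower) bounds on $x_j$ evaluated at $\alpha'$; such an $r$ exists precisely because one of the two side-sets is empty. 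The main obstacle in the whole proof is bookkeeping around the explicit form of $\bnd{j}{\cdot}$ when verifying that the projection-matrix rows encode the inequalities claimed above; everything else is a clean extremum-selection argument.
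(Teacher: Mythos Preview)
Your proof is correct and follows essentially the same approach as the paper's: for $\supseteq$ you extend $\alpha'$ by the value of the designated bound, and for $\subseteq$ you select the index $i$ attaining the extremal bound, exactly as in the paper. Your treatment is slightly more explicit---you spell out the row-by-row identification with the $\bnd{j}{\cdot}$ inequalities and you handle the degenerate case directly rather than deferring to the correctness of Fourier--Motzkin---but these are only expository differences, not a different argument.
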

\begin{proof}
	The case $\lbs{a}{j}=\emptyset$ or $\ubs{a}{j}=\emptyset$ follows from the correctness of FM.
	We show the equality for $\lbs{a}{j}$, the case for $\ubs{a}{j}$ is analogous.
	\begin{description}
		\item[$\supseteq$] Let $i \in \lbs{a}{j}$ and $\alpha \models \fmpproj{\sys{a}{b}}{j}{i}$, then for all $\ell \in \lbs{a}{j}$, $u \in \ubs{a}{j}$ it holds 
		\[\alpha(\bnd{j}{\rvect{a}{\ell}\;\vect{x} \leq b_\ell}) \leq \alpha(\bnd{j}{\rvect{a}{i}\;\vect{x} \leq b_i}) \leq \alpha(\bnd{j}{\rvect{a}{u}\;\vect{x} \leq b_u}).\]
Thus, $\alpha[x_j \mapsto \alpha(\bnd{j}{\rvect{a}{i}\;\vect{x} \leq b_i})] \models \sys{a}{b}$.

		\item[$\subseteq$] Let $\alpha \models \sys{a}{b}$. Then, by Fourier-Motzkin, for all $\ell \in \lbs{A}{j}$ and $u \in \ubs{A}{j}$ it holds
		$\alpha(\bnd{j}{\rvect{a}{\ell}\;\vect{x} \leq b_\ell}) \leq \alpha(\bnd{j}{\rvect{a}{u}\;\vect{x} \leq b_u})$.
		Let $i \in \lbs{A}{j}$ so that
		\[
			\alpha(\bnd{j}{\rvect{a}{i}\;\vect{x} \leq b_i}) = \max\{\alpha(\bnd{j}{\rvect{a}{\ell}\;\vect{x} \leq b_\ell}) \mid \ell \in \lbs{a}{j} \},
		\]
		then $\alpha \models \fmpproj{\sys{a}{b}}{j}{i}$.\qedhere
	\end{description}
\end{proof}

\begin{defi}[FMplex Variable Elimination]
	For $\matr{A} \in \rationals^{m \times n}$, $\vect{b} \in \rationals^{m\times 1}$, $j \in [n]$ and $* \in \{ -,+ \}$, we define
	\begin{align*}
		\fmplexelim^*_{j}(\sys{a}{b}) = & \left.\begin{cases}
		\{\fmpproj{\sys{a}{b}}{j}{i} \mid i \in \sbs{a}{j}\} & \text{ if } \lbs{a}{j} \neq \emptyset \textit{ and }  \ubs{a}{j}\neq\emptyset\\
		\{ \fmpproj{\sys{a}{b}}{j}{\bot}  \} &  \textit{otherwise}.
		\end{cases}\right.
	\end{align*}
\end{defi}

The FMplex elimination defines a set of restricted projections which can be composed to the full projection according to \Cref{lma:restricted-projection}.
Lifting this from sets to logic naturally results in the following theorem which demonstrates the usage of our method.

\begin{thm}\label{thm:correctness-elim}
	Let $\matr{A} \in \rationals^{m \times n}$, $\vect{b} \in \rationals^{m\times 1}$, and $j \in [n]$. Then
	\[\exists x_j.\; \sys{a}{b}\quad \equiv \quad\bigvee_{S \in \fmplexelim^-_{j}(\sys{a}{b}) } S  \quad\equiv\quad \bigvee_{S \in \fmplexelim^+_{j}(\sys{a}{b})} S.\] 
\end{thm}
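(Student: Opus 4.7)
The plan is to reduce the theorem almost mechanically to \Cref{lma:restricted-projection}, since that lemma already gives the corresponding set-theoretic equality on solution sets. What remains is to lift that set equality to logical equivalence of formulas with an existential quantifier on the left-hand side and a disjunction on the right.

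First I would fix an assignment $\alpha$ with domain $X \setminus \{x_j\}$ (or, more generally, any domain containing $X \setminus \{x_j\}$; since the projected systems have a zero column in position $j$, the value of $\alpha$ on $x_j$, if present, is irrelevant) and translate both sides into membership statements. By definition of existential quantification and of the projection of a solution set, $\alpha \models \exists x_j.\; \sys{a}{b}$ holds precisely when there exists $r \in \reals$ with $\alpha[x_j \mapsto r] \models \sys{a}{b}$, which is the same as $\alpha \in \proj{\solset{\sys{a}{b}}}{X \setminus \{x_j\}}$. Symmetrically, for either choice of sign $* \in \{-,+\}$, $\alpha \models \bigvee_{S \in \fmplexelim^*_{j}(\sys{a}{b})} S$ iff $\alpha \in \bigcup_{S \in \fmplexelim^*_{j}(\sys{a}{b})} \solset{S}$, by the standard semantics of disjunction over a finite set.

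Next I would split on the two cases of \Cref{def:restricted-projection}. If $\lbs{a}{j} \neq \emptyset$ and $\ubs{a}{j} \neq \emptyset$, then $\fmplexelim^-_{j}(\sys{a}{b})$ enumerates $\fmpproj{\sys{a}{b}}{j}{i}$ for $i \in \lbs{a}{j}$ (and $\fmplexelim^+_{j}$ analogously for $\ubs{a}{j}$), so the union of solution sets on the right is exactly the union appearing in \Cref{lma:restricted-projection}; applying that lemma directly gives equality with $\proj{\solset{\sys{a}{b}}}{X \setminus \{x_j\}}$, and hence the desired equivalence for both $*=-$ and $*=+$. In the degenerate case where $\lbs{a}{j} = \emptyset$ or $\ubs{a}{j} = \emptyset$, both $\fmplexelim^-_{j}(\sys{a}{b})$ and $\fmplexelim^+_{j}(\sys{a}{b})$ consist of the single element $\fmpproj{\sys{a}{b}}{j}{\bot}$, and the disjunction collapses to that one formula; the second clause of \Cref{lma:restricted-projection} then supplies the required equality.

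I do not expect any real obstacle here, since \Cref{lma:restricted-projection} already carries all the mathematical content. The only point requiring mild care is the handling of the assignment domain, to make sure that a model of $\exists x_j. \sys{a}{b}$ and a model of one of the $\fmpproj{\sys{a}{b}}{j}{i}$ can be identified on $X \setminus \{x_j\}$; this is resolved by the observation that the column $\cvect{a}{j}$ of every restricted projection is zero, so any extension to $x_j$ is equally (dis)satisfying.
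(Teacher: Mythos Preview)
Your proposal is correct and follows exactly the paper's approach: the paper's own proof consists of the single sentence ``This immediately follows from \Cref{lma:restricted-projection}.'' You have merely spelled out the routine lifting from the set-theoretic equality of \Cref{lma:restricted-projection} to the logical equivalence, together with the case distinction on whether $\lbs{a}{j}$ or $\ubs{a}{j}$ is empty, which is precisely what the paper leaves implicit.
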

\begin{proof}
	This immediately follows from \Cref{lma:restricted-projection}.
\end{proof}

For eliminating multiple variables, we iteratively apply $\fmplexelim^-$ or $\fmplexelim^+$ to each restricted projection resulting from the previous elimination step.
Note that we can choose the next variable to be eliminated as well as the bound type independently in every branch.

\begin{exa}\label{example:fmplex-elim-procedure}
	We continue \Cref{example:fmplex-idea}, where we eliminated $x_2$, and next eliminate $x_1$: 
	\begin{align*}
		\exists x_1. \; \exists x_2.\; \sys{a}{b}\quad \equiv\quad \exists x_1.\bigvee_{S\in \fmplexelim^-_{2}(\sys{a}{b})} S \quad \equiv& \quad
		\exists x_1.\; \left(x_1 \leq 3 \land -3x_1 \leq -3\land -x_1 \leq 1\right)\;\\[-10pt]
		&\;\lor  \exists x_1. \;\left(-x_1 \leq -3 \land -2 x_1 \leq 0 \land 0 \leq 4\right) 
	\end{align*}
	We eliminate the two quantifiers for $x_1$ separately, using
	\begin{align*}
		\fmplexelim^-_{1}\left(\;x_1 \leq 3 \land -3x_1 \leq -3\land -x_1 \leq 1\;\right) &= \{\;(-1 \leq 1 \land 1 \leq 3),\; (1 \leq -1\land -1 \leq 3 )\;\}\\
		\text{and } \fmplexelim^-_{1}\left(\; -x_1 \leq -3 \land -2 x_1 \leq 0 \land\ 0 \leq 4\;\right) &= \{\;(0 \leq 4)\;\}, \text{ giving us the final result}
	\end{align*}
	\[
		\exists x_1. \; \exists x_2.\; \sys{a}{b}\quad \equiv\quad \big(\;(-1 \leq 1 \land 1 \leq 3)\quad \lor\quad (1 \leq -1\land -1 \leq 3)\;\big) \quad\lor\quad (0 \leq 4).
	\]
	Note that in any of the systems, we could use $\fmplexelim^+$ instead of $\fmplexelim^-$.
	For example, the first disjunct obtained from eliminating $x_2$ contains two lower bounds on $x_1$, but only one upper bound.
	Thus, one can reduce the number of resulting disjuncts, by instead computing
	\[\fmplexelim^+_{1}\left(\;x_1 \leq 3 \land -3x_1 \leq -3\land -x_1 \leq 1\;\right) = \{(1 \leq 3 \land -1 \leq 3)\}.\]
\end{exa}
	
We analyze the complexity in terms of the number of new rows (or constraints) that are constructed during the elimination of all variables:

\begin{thm}[Complexity of \fmplexelim]
	Let $\matr{A} \in \rationals^{m \times n}$, and $\vect{b} \in \rationals^{m\times 1}$.
	When eliminating $n$ variables from $\sys{a}{b}$ by repeated application of $\fmplexelim^-$ or $\fmplexelim^+$, at most $\mathcal{O}(n\cdot m^{n+1})$ new rows are constructed.
\end{thm}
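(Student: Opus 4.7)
The plan is to examine the recursion tree produced by iterated application of $\fmplexelim$. The root of this tree is the input system $\sys{a}{b}$, and each node at depth $k$ is a sub-system obtained after $k$ variable eliminations; the children of a node $\sys{a'}{b'}$ at depth $k$ are exactly the sub-systems in $\fmplexelim^{*}_{j}(\sys{a'}{b'})$ for the $(k{+}1)$-st variable to be eliminated (with any choice of $* \in \{-,+\}$). Since we eliminate $n$ variables, this tree has depth at most $n$, and it suffices to bound the total number of rows summed over all sub-systems at depths $1,\ldots,n$.

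First I would establish two uniform bounds on the tree: the number of rows per sub-system is at most $m$, and the branching factor at any node is at most $m$. For the row bound, \Cref{def:restricted-projection} yields that a restricted projection $\fmpproj{\sys{a'}{b'}}{j}{i}$ has a projection matrix with $m'-1$ rows when both $\lbs{a'}{j}$ and $\ubs{a'}{j}$ are non-empty (where $m'$ denotes the row count of the parent), and $|\nbs{a'}{j}| \leq m'$ rows otherwise. A straightforward induction on depth then shows that every sub-system has at most $m$ rows. For the branching bound, the definition of $\fmplexelim^{*}_{j}$ shows that a node spawns at most $|\sbs{a'}{j}| \leq m$ children in the branching case, and exactly one child otherwise.

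Combining these two bounds, the number of nodes at depth $k$ is at most $m^k$, and each contains at most $m$ rows, so at most $m^{k+1}$ new rows are constructed at that depth. Summing over depths yields
\[
    \sum_{k=1}^{n} m^{k+1} \;\leq\; n \cdot m^{n+1},
\]
which gives the claimed $\mathcal{O}(n \cdot m^{n+1})$ bound.

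The main obstacle is ensuring that the row-count and branching bounds apply uniformly across all branches, including the degenerate ones in which the currently eliminated variable has no lower or no upper bound. In such a node $\fmplexelim$ produces a single child whose row count does not exceed that of the parent, so these levels only lengthen paths without increasing the tree's width and therefore cannot violate the estimate above. Once the per-node bounds are justified, the overall count reduces to the routine geometric-sum estimate shown above.
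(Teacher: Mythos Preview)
Your proof is correct and follows essentially the same approach as the paper: both bound the total work by counting rows across the recursion tree of restricted projections. The paper's argument is marginally sharper in that it tracks the strict decrease of the row count at each branching level via the recurrence $N(m,n)\le (m-1)\big((m-1)+N(m-1,n-1)\big)$, whereas you use the uniform bounds $m$ on both branching factor and row count; this yields the same $\mathcal{O}(n\cdot m^{n+1})$ estimate.
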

\begin{proof}
	Let $N(m,n)$ be the maximal number of constructed rows when eliminating $n$ variables from a system of $m$ rows.
	Applying $\fmplexelim^-$ or $\fmplexelim^+$ once leads to at most $m-1$ restricted projections, each of which contains at most $m-1$ rows and $n-1$ variables.
	This implies $N(m,n) \leq (m-1)\cdot((m-1) + N(m-1,n-1))$, and repeating this reasoning yields
	\[N(m,n) \leq \sum\limits_{i=1}^{k}(m-i)\cdot \prod\limits_{j=1}^{i}(m-j) \leq n\cdot m^{n+1}, \text{ where } k = \min(n,m).\qedhere\]
\end{proof}

The following example illustrates that this exponential complexity can occur in practice, thus the bound is strong.

\begin{exa}\label{example:worstcase}
	We use $\fmplexelim^-$ to eliminate $x_1, \ldots, x_{n}$ in that order from the constraint set 
	\[
		\{\;-x_j - x_{n+1} \leq 0 \mid j \in [n]\;\}\; \cup\; \{\;-x_j - 2x_{n+1} \leq 0 \mid j \in [n]\;\}\; \cup\; \{\;x_1 + \ldots + x_{n+1} \leq -1\;\}.
	\]
	Any intermediate system where the first $k \in \{0,...,n\}$ variables have been eliminated ($k=0$ gives the initial system) has the following form:
	There are two lower bounds $-x_j - x_{n+1} \leq 0$ and $-x_j - 2x_{n+1} \leq 0$ on each of the remaining variables $x_{k+1},\ldots,x_{n}$, and there is one constraint $x_{k+1} + \ldots + x_{n} + ax_{n+1} \leq -1$ which is the only upper bound on all those variables.
	This upper bound is the result of combining $(x_1 + \ldots + x_{n+1} \leq -1)$ with the chosen lower bounds and thus, $a < 0$ holds if $k > 1$.
	Moreover, there are constraints $(x_{n+1} \leq 0)$ and/or $(-x_{n+1} \leq 0)$ resulting from combining the two lower bounds.

	Accordingly, each intermediate system has two branches for the next elimination, each of which requires the computation of two constraints.
	Thus, the total number of generated constraints is $2\cdot (2 + 4 + \ldots + 2^{n}) = 2\cdot(2^{n+1}-2)$.

	Interestingly, every leaf of this elimination contains a constraint $ax_{n+1} \leq -1$, with $a < 0$, i.e. a lower bound on $x_{n+1}$ that implies $x_{n+1} > 0$.
	If any of the greatest lower bounds chosen to create a leaf had the form $-x_j - 2x_{n+1} \leq 0$, then that resulting leaf contains $x_{n+1} \leq 0$ and is thus unsatisfiable.
	Consequently, the only satisfiable leaf is obtained by choosing the lower bound $-x_j-x_{n+1} \leq 0$ for all $j \in \{ 0,\ldots,k \}$.
\end{exa}

While still exponential, this bound is considerably better than the theoretical doubly exponential worst-case complexity of the FM method.
Shortly speaking, FMplex trades one exponential step at the cost of the result being a decomposition into multiple partial projections.
However, there are systems for which FMplex produces strictly more rows than the FM method: 
assume a system of $m$ constraints, with each $m/2$ upper and lower bounds on a variable $x_1$, then $\fmplexelim^-$ (or $\fmplexelim^+$) constructs $(m/2)\cdot (m-1)$ new constraints in the elimination of $x_1$, while FM would only compute $(m/2)^2$.
Here, the advantage of FMplex becomes apparent only for the elimination of multiple variables, when the split into sub-problems prevents the doubly exponential growth.

Like FM, FMplex keeps redundancies from the input throughout the algorithm, thus there might be identical rows in the same or across different sub-problems.
But in contrast to FM, FMplex does not introduce any redundancies by construction in the sense of \Cref{def:redundancyconstruction}.

\begin{thm}
	Let $\matr{A} \in \rationals^{m \times n}$, $\vect{b} \in \rationals^{m\times 1}$ and $k\in[m]$.
	Assume $(\matr{a}^{(0)}\vect{x} \leq \vect{b}^{(0)}) = (\sys{a}{b})$ and for all $j \in [k]$, let $(\matr{a}^{(j)}\vect{x} \leq \vect{b}^{(j)}) \in \fmplexelim^-_{j}(\matr{a}^{(j-1)}\vect{x} \leq \vect{b}^{(j-1)}) \cup \fmplexelim^+_{j}(\matr{a}^{(j-1)}\vect{x} \leq \vect{b}^{(j-1)})$.
Let $\matr{F^{(1)}}, \ldots, \matr{F^{(k)}}$ be the respective projection matrices, and $\matr{F} = \matr{F^{(k)}} \cdot \ldots \cdot \matr{F^{(1)}}$.
	Then $F$ is linearly independent.
\end{thm}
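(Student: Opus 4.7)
The plan is to prove the claim by induction on $k$, reducing it to two ingredients: linear independence of each individual projection matrix $\matr{F^{(j)}}$, and the general fact that a product of matrices with linearly independent rows itself has linearly independent rows.

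First, I would show that each $\matr{F^{(j)}}$ has linearly independent rows. In the degenerate case where one of $I_j^-(\matr{a}^{(j-1)})$ or $I_j^+(\matr{a}^{(j-1)})$ is empty, $\matr{F^{(j)}}$ consists of distinct unit row vectors $\rvect{e}{i'}$ for $i' \in I_j^0(\matr{a}^{(j-1)})$, so independence is immediate. In the general case, the key structural observation is that the rows of $\matr{F^{(j)}}$ are naturally indexed by $i' \in [m_{j-1}] \setminus \{i_j\}$, where $i_j$ is the designated bound: the row labelled $i'$ has a non-zero entry in column $i'$, no other row has a non-zero entry in column $i'$, and the only column that may be shared between different rows is $i_j$. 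Consequently, if $\vect{c}\matr{F^{(j)}} = \vect{0}$, then inspecting each column $i' \neq i_j$ isolates a single coefficient $c_{i'}$ multiplied by a non-zero scalar, forcing $c_{i'} = 0$ and hence $\vect{c} = \vect{0}$.

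For the induction step, I would set $\matr{G} = \matr{F^{(k-1)}}\cdots \matr{F^{(1)}}$ and assume inductively that its rows are linearly independent (with base case $k=1$ handled above). Suppose $\vect{c}(\matr{F^{(k)}}\matr{G}) = \vect{0}$, i.e.\ $(\vect{c}\matr{F^{(k)}})\matr{G} = \vect{0}$. Since the rows of $\matr{G}$ are linearly independent, the row-vector map $\vect{y} \mapsto \vect{y}\matr{G}$ is injective, so $\vect{c}\matr{F^{(k)}} = \vect{0}$; linear independence of $\matr{F^{(k)}}$ then forces $\vect{c} = \vect{0}$, closing the induction.

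The main obstacle, as I see it, is the bookkeeping in the first step: one must verify that the ``diagonal'' coefficient at column $i'$ really is non-zero for every row label $i' \neq i_j$. This reduces to the observations that $i' \in I_j^-(\matr{a}^{(j-1)}) \cup I_j^+(\matr{a}^{(j-1)})$ implies $a^{(j-1)}_{i',j} \neq 0$, so that $\pm 1/a^{(j-1)}_{i',j}$ is well-defined and non-zero, while $i' \in I_j^0(\matr{a}^{(j-1)})$ yields the coefficient $1$. Once this structural observation is in place, both parts of the argument are essentially routine.
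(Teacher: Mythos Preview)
Your proof is correct and follows essentially the same approach as the paper: the paper's proof simply asserts that each $\matr{F^{(j)}}$ is linearly independent ``by definition'' and that a product of linearly independent matrices is again linearly independent, without spelling out either claim. Your proposal supplies exactly these two missing arguments (the column-by-column structural check for each $\matr{F^{(j)}}$ and the inductive injectivity argument for the product), so it is a fleshed-out version of the same proof.
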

\begin{proof}
  By definition, $\matr{F^{(k)}}$, $\ldots$, $\matr{F^{(1)}}$ are linearly independent.
  Thus, $\matr{F}$ is also linearly independent as a product of linearly independent matrices.
\end{proof}

\section{FMplex as Satisfiability Checking Procedure}
\label{sec:sat}

A formula is satisfiable if and only if eliminating all variables (using any quantifier elimination method such as FM or FMplex) yields a tautology.
However, FMplex applies case splitting such that satisfiability of any of the cases implies the satisfiability of the original problem.
Therefore, we do not need to compute the whole projection at once, but we can explore the decomposition using a depth-first search.
In this section, we first present a basic version of our algorithm to exploit this observation. Later, we examine how the search tree can be pruned, resulting in two additional variants.
The resulting search tree has the original system as root, and each node has as children the systems resulting from restricted projections.
The original system is satisfiable if and only if there is a leaf whose constraints are all tautologies.
An example is depicted in \Cref{figure:example-searchtree}.

\begin{figure}[b]
	\centering
	\begin{tikzpicture}
		\tikzstyle{level 1} = [level distance=15mm, sibling distance = 50mm]
		\tikzstyle{level 2} = [level distance=15mm, sibling distance = 40mm]
			\node[] (root) {$\sys{a}{b}$} 
				child{node[] (v11) {$\fmpproj{\sys{a}{b}}{2}{1}$}
					child{node[] (v21) {$(0 \leq 2)\land (0 \leq 2)$}}
					child{node[] (v22) {$(0 \leq 4) \land (0 \leq -2)$}}
				}
				child{node[] (v12) {$\fmpproj{\sys{a}{b}}{2}{2}$}
					child{node[] (v23) {$(0 \leq 4)$}}
				};
		\end{tikzpicture}
	
	\caption{%
	The search tree corresponding to \Cref{example:fmplex-elim-procedure}.
	The very first leaf (bottom left) is already satisfiable, meaning that the rest would not need to be computed.%
	}\label{figure:example-searchtree}
\end{figure}

An important observation is that we can decide independently for each node of the search tree which variable to eliminate next and whether to branch on lower or on upper bounds.
This freedom of choice is formalized in the following definition.

\begin{defi}[Branch Choices]
	The set of \emph{branch choices} for a system $\sys{a}{b}$ is
	\begin{align*}
		\textit{branch\_choices}(\sys{a}{b})\ =&\ \ \{\; (x_j, I^{*}_j(A)) \mid * \in \{-,+\} \;\wedge\; j \in [n]\;\wedge\; \lbs{a}{j} \neq \emptyset \neq  \ubs{a}{j}\;\} \\
		&\ \cup\ \{\; (x_j, \{\bot\}) \mid j \in [n]\;\wedge\; (\lbs{a}{j} = \emptyset \vee \ubs{a}{j} = \emptyset) \;\}.
	\end{align*}
\end{defi}

For an initial input $\widehat{\matr{a}}\vect{x} \leq \widehat{\vect{b}}$ with $\widehat{m}$ rows, we define the depth-first search using the recursive method $\texttt{FMplex}(\widehat{\matr{a}}\vect{x} \leq \widehat{\vect{b}};\sys{a}{b},\matr{f})$ in \Cref{algo:base} where $\sys{a}{b}$ is the currently processed sub-problem in the recursion tree. The original system $\widehat{\matr{a}}\vect{x} \leq \widehat{\vect{b}}$ with $\widehat{m}$ is not used in the algorithm itself, but included for illustrational purposes.
We track the relation of $\sys{a}{b}$ to $\widehat{\matr{a}}\vect{x} \leq \widehat{\vect{b}}$ in terms of linear combinations using the parameter $\matr{f}$.
The initial call is defined as $\texttt{FMplex}(\widehat{\matr{a}}\vect{x} \leq \widehat{\vect{b}}) = \texttt{FMplex}(\widehat{\matr{a}}\vect{x} \leq \widehat{\vect{b}};\widehat{\matr{a}}\vect{x} \leq \widehat{\vect{b}},\matr{e})$.
We allow that $\sys{a}{b}$ contains identical rows when they are obtained in different ways (which is reflected by $\matr{f}$).
We need to keep these duplicates for proving the results of this section.

\begin{algorithm}[b]%
	\linespread{1.25}\selectfont
    \caption{$\texttt{FMplex}(\widehat{\matr{a}}\vect{x} \leq \widehat{\vect{b}};\sys{a}{b},\matr{f})$}\label{algo:base}
	\SetKwInOut{Data}{Data}
    \SetKwInOut{Input}{Input}
    \SetKwInOut{Output}{Output}
    \DontPrintSemicolon
    \SetKwFunction{FCheck}{FMplex}
    \SetKwProg{Fn}{Function}{:}{}
	\Data{$\widehat{\matr{a}} \in \rationals^{\widehat{m} \times n},\ \widehat{\vect{b}}\in \rationals^{\widehat{m}}$}
    \Input{$\matr{a} \in \rationals^{m \times n},\ \vect{b}\in \rationals^{m}$, $\matr{f} \in \rationals^{m \times \widehat{m}}$ s.t. $\matr{f}\widehat{\matr{a}} = \matr{a}$ and $\matr{f}\widehat{\vect{b}} = \vect{b}$}
    \Output{(\sat, $\alpha$) with $\alpha \models \matr{a}\vect{x} \leq \vect{b}$, or (\unsat, $K$) where the rows $K \subseteq [\widehat{m}]$ of $\widehat{\matr{a}}\vect{x} \leq \widehat{\vect{b}}$ build a minimal unsatisfiable subset of constraints $\{ \vect{\widehat{a}_{k,-}} \vect{x} \leq \widehat{b_k} \mid k \in K \}$, or
	\partialunsat\\\hspace*{-4em}\hrulefill}
	\lIf{$\matr{a} = 0 \land \vect{b} \geq 0$} {
		\Return{$($\sat, $\alpha: \emptyset \to \reals)$\tcp*[f]{trivial sat}} 
	}
    \If(\tcp*[f]{global conflict}){$\exists i \in [m].\; \rvect{a}{i} = 0 \land b_i < 0 \land \rvect{f}{i} \geq 0$}{
            \Return{$(\unsat, \{ i' \mid f_{i,i'} \neq 0 \})$} 
    }
	\textbf{choose} $(x_j,V) \in \textit{branch\_choices}(\matr{a}\vect{x} \leq \vect{b})$\;
    \ForEach{$i \in V$}{
		\textbf{compute} $\matr{a'}\vect{x} \leq \vect{b'} := \fmpproj{\matr{a}\vect{x} \leq \vect{b}}{j}{i}$ with projection matrix $\matr{f'}$ \;
        \Switch{\FCheck($\sys{\widehat{a}}{\widehat{b}};\matr{a'}\vect{x} \leq \vect{b'},\matr{f'}\cdot\matr{f}$)}{
			\lCase{$(\unsat, K)$}{\Return{$(\unsat, K)$}}
        	\lCase{$(\sat, \alpha)$}{\Return{$(\sat, \alpha[x_j \mapsto r])$} for a suitable $r \in \rationals$, e.g. $r = \bnd{j}{\vect{a_{i,-}}\vect{x}\leq b_i}$}
			\lCase{\partialunsat}{\textbf{continue}}
    	}
    }
	\Return{\partialunsat}
\end{algorithm}

\paragraph{Solutions} If a trivially satisfiable node is found, the algorithm constructs an assignment starting with the empty assignment and extending it with values for all eliminated variables in reverse elimination order. 
For every variable $x_j$, the assignment $\alpha$ is extended by assigning to $x_j$ a value that us not below any lower bound and not above any upper bound on $x_j$, when the bounds are evaluated under $\alpha$.
By the semantics of the projection, the value of the designated (largest lower or smallest upper) bound on $x_j$ is suitable.

\paragraph{Conflicts} We distinguish inconsistencies in $\sys{a}{b}$ by the following notions: 
We call a row $i$ of $\sys{a}{b}$ a \emph{conflict} if it is of the form $\rvect{a}{i} = \vect{0}$ with $b_i < 0$.
We call the conflict \emph{global} if $\rvect{f}{i} \geq 0$  and \emph{local} otherwise.
In case of a global conflict, Farkas' Lemma allows to deduce the unsatisfiability of $\sys{\widehat{a}}{\widehat{b}}$, thus stopping the search before the whole search tree is generated.
Then a set of conflicting rows $K$ of the input system corresponding to $\rvect{f}{i}$ is returned.
In particular, the set $\{\rvect{\widehat{a}}{j}\; \vect{x} \leq \widehat{b}_j \mid f_{i,j} \neq 0\}$ is a minimal unsatisfiable subset of the constraints in $\sys{\widehat{a}}{\widehat{b}}$.
In case of a local conflict, the current node is unsatisfiable, but we need to continue the search in the other branches.
The algorithm returns $\partialunsat$ to indicate that $\sys{a}{b}$ is unsatisfiable, but the unsatisfiability of $\sys{\widehat{a}}{\widehat{b}}$ cannot be derived.
This approach, formalized in \Cref{algo:base}, guarantees that the initial call will never return $\partialunsat$; we always find either a global conflict or a solution.

\vspace{0.75em}
\noindent The correctness and completeness of \texttt{FMplex} follows from \Cref{thm:correctness-elim} and \Cref{thm:global-conflict}.
As an immediate consequence of \Cref{thm:correctness-elim}, \texttt{FMplex} returns \sat~ if and only if the input is satisfiable.
On the other hand, if the input is unsatisfiable, then the initial call to \Cref{algo:base} will always return \unsat~ and never \partialunsat, as is shown in \Cref{thm:global-conflict}.

\begin{thm} \label{thm:global-conflict}
	Let $\matr{\widehat{a}} \in \rationals^{\widehat{m} \times n}$, and $\vect{\widehat{b}} \in \rationals^{\widehat{m}\times 1}$.
	Then $\sys{\widehat{a}}{\widehat{b}}$ is unsatisfiable if and only if the call $\texttt{FMplex}(\sys{\widehat{a}}{\widehat{b}})$ to \Cref{algo:base} terminates with a global conflict.
\end{thm}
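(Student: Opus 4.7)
My plan is to treat the two directions of the biconditional separately: the ``global conflict $\Rightarrow$ unsatisfiable'' direction is a short application of Farkas' Lemma, while the ``unsatisfiable $\Rightarrow$ global conflict'' direction requires a careful structural analysis of \Cref{algo:base}'s search tree.

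For the easy direction, I would first observe that \Cref{algo:base} maintains, in every recursive call, the invariant $\matr{f}\widehat{\matr{a}}=\matr{a}$ and $\matr{f}\widehat{\vect{b}}=\vect{b}$: it holds initially since $\matr{f}=\matr{e}$, and it is preserved because each recursive call multiplies by a projection matrix $\matr{f}'$ satisfying $\matr{f}'\matr{a}=\matr{a}'$ and $\matr{f}'\vect{b}=\vect{b}'$ (\Cref{def:restricted-projection}). A reported global conflict means that some row $i$ satisfies $\rvect{a}{i}=\vect{0}$, $b_i<0$, and $\rvect{f}{i}\geq 0$; taking $\vect{\phi}:=\rvect{f}{i}$ therefore yields $\vect{\phi}\geq 0$, $\vect{\phi}\widehat{\matr{a}}=\vect{0}$, and $\vect{\phi}\widehat{\vect{b}}<0$, so by \Cref{thm:farkas} the original system $\sys{\widehat{a}}{\widehat{b}}$ is unsatisfiable.

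For the hard direction, I would show that the depth-first search always encounters a leaf with a global conflict. Since \Cref{algo:base} returns \unsat\ as soon as any recursive call does, it suffices to show that the search tree rooted at the initial call contains at least one leaf at which some row $i$ satisfies $\rvect{a}{i}=\vect{0}$, $b_i<0$, and $\rvect{f}{i}\geq 0$. I would attempt this by induction on $n$, invoking \Cref{lma:restricted-projection} so that at each internal node every child is itself an unsatisfiable system to which a suitable inductive hypothesis can be applied, reducing the problem to a leaf-level statement about the accumulated $\matr{f}$.

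The main obstacle lies precisely at this leaf level: when $\matr{a}=\vect{0}$, some $b_i<0$ is immediate from unsatisfiability, but the accumulated $\matr{f}$ may contain rows with mixed signs coming from the \emph{same-type} bound combinations of \Cref{def:restricted-projection}, so a priori only a local conflict is guaranteed. The key insight is that the \textbf{foreach} loop enumerates every designated bound in $V$, and the \emph{cross-type} combinations in \Cref{def:restricted-projection} preserve non-negativity; moreover, a negative contribution introduced by a same-type combination at one level can be cancelled by a positive contribution at a later level (as can be traced in \Cref{example:fmplex-elim-procedure}). I would formalize this via a separate lemma stating that for any Farkas certificate $\vect{\phi}$ of $\sys{\widehat{a}}{\widehat{b}}$ provided by \Cref{thm:farkas}, the search tree contains a leaf at which some row's accumulated $\matr{f}$-coefficients are a positive multiple of $\vect{\phi}$; intuitively, this amounts to showing that the set of conical combinations derivable by Fourier-Motzkin elimination is subsumed by the rows occurring across FMplex's leaves. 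Establishing this row-level correspondence between FMplex's branching and Fourier-Motzkin's conical derivations is, in my view, the most delicate step of the argument.
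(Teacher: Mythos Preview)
Your $\Leftarrow$ direction is correct and matches the paper's argument.

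Your $\Rightarrow$ direction has a genuine gap: the lemma you propose is false. Take the four constraints $c_1:x_1+x_2\leq 0$, $c_2:x_1-x_2\leq 0$, $c_3:-x_1+x_2\leq 0$, $c_4:-x_1-x_2\leq -1$. The vector $\vect{\phi}=(1,1,1,1)$ is a Farkas certificate, and it is also a row of the Fourier--Motzkin output after eliminating both variables. Yet no FMplex leaf contains a row whose accumulated $\matr{f}$ is a positive multiple of $(1,1,1,1)$: every branch fixes a single designated bound at the first level, and the only non-negative leaf rows that arise (for any branching choices) are scalar multiples of $(1,0,0,1)$ or $(0,1,1,0)$. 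So neither your strong lemma (``every Farkas certificate appears'') nor your weaker fallback (``FM's conical rows are subsumed by FMplex's leaf rows'') holds. The example still works precisely because $(1,0,0,1)$ \emph{does} appear---but your argument gives no reason why some non-negative conflict must survive.

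The paper's proof supplies the missing idea: fix a \emph{minimal} unsatisfiable subset $\widehat{K}\subseteq[\widehat{m}]$. Minimality guarantees that whenever a variable $x_j$ has bounds coming from $\widehat{K}$, it has both a lower and an upper one there; hence the \textbf{foreach} loop will at some point designate a bound $i\in\widehat{K}$. Following such choices down the tree keeps the surviving minimal unsatisfiable subset inside the restricted projection and strictly shrinks it, until a single conflicting row remains whose $\rvect{f}{i}$ is supported only on $\widehat{K}$. The non-negativity of this $\rvect{f}{i}$ is then argued by contradiction: if some entry were negative, one could combine $\rvect{f}{i}$ with the (strictly positive on $\widehat{K}$) Farkas witness for $\widehat{K}$ to zero out an index of $\widehat{K}$ while keeping a certificate, contradicting minimality. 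This minimality-based contradiction is the crux of the $\Rightarrow$ direction, and it is absent from your plan; without it, the ``delicate step'' you flag cannot be completed along the lines you sketch.
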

\begin{proof}
	We show the two directions of the equivalence separately:

\vspace{0.5em}
\noindent$\Leftarrow$:\quad Assume the call $\texttt{FMplex}(\sys{\widehat{a}}{\widehat{b}};\matr{{a}}\vect{x} \leq \vect{{b}},F)$ in the recursion tree returns \unsat, i.e. $\sys{a}{b}$ contains a global conflict in some row $i$.
Then $\rvect{a}{i} = \rvect{f}{i} \matr{\widehat{a}} = 0$, $b_i = \rvect{f}{i} \cdot \vect{\widehat{b}} < 0$ and $\rvect{f}{i} \geq 0$, thus Farkas' Lemma (\Cref{thm:farkas}) implies that $\sys{\widehat{a}}{\widehat{b}}$ is unsatisfiable.

\vspace{0.5em}
\noindent$\Rightarrow$:\quad Assume $\sys{\widehat{a}}{\widehat{b}}$ is unsatisfiable.
Then there is a minimal subset $\widehat{K} \subseteq [\widehat{m}]$ with size $|\widehat{K}| \leq n + 1$ such that the corresponding rows $\{ \vect{\widehat{a}_{i,-}} \vect{x} \leq \widehat{b}_i \mid i \in \widehat{K} \}$ are together unsatisfiable.
We show that the algorithm will eventually construct a conflict from such a minimal unsatisfiable subset.
Then, we prove that this will be a global conflict, which will prompt the algorithm to return \unsat.
				
Let $(x_j,V) \in \textit{branch\_choices}(\sys{\widehat{a}}{\widehat{b}})$ and $i^*\in V$.
If $\widehat{K}\subseteq\nbs{\widehat{a}}{j}$, then each of the rows in $\widehat{K}$ are contained also in $\fmpproj{\sys{\widehat{a}}{\widehat{b}}}{j}{i^*}$.
Note that since $|\widehat{K}|>0$, this case cannot always apply until termination (except for the degenerate case where a row contains only zeros).
	
Otherwise, both $\lbs{\widehat{a}}{j} \cap \widehat{K} \neq \emptyset$ and $\ubs{\widehat{a}}{j} \cap \widehat{K}  \neq \emptyset$ must hold due to the minimality of $\widehat{K}$ (if only one intersection was non-empty, we could remove it from $\widehat{K}$ to still obtain an infeasible subset). Thus, at some point the algorithm will choose  $i^*=i$ for some $i \in \widehat{K}$.

Let $(\sys{a}{b})=\fmpproj{\sys{\widehat{a}}{\widehat{b}}}{j}{i}$ be the resulting partial projection.
Since $\sys{\widehat{a}}{\widehat{b}}$ is unsatisfiable, also $\sys{a}{b}$ is unsatisfiable by \Cref{lma:restricted-projection}.
With the same reasoning, $\matr{a'}\vect{x}\leq\vect{b'} := \fmpproj{\subsys{\widehat{a}}{\widehat{b}}{\widehat{K}}}{j}{i}$ is unsatisfiable.
We observe that all rows of $\sys{a'}{b'}$ are contained in $\sys{a}{b}$, thus there exists a minimal set $K$ of rows of $\sys{a}{b}$ that are also contained in $\sys{a'}{b'}$ such that the corresponding rows are a minimal unsatisfiable subset.
In particular, we have $0<\abs{K} < |\widehat{K}|$, because $\sys{a'}{b'}$ has less than $|\widehat{K}|$ rows.
	
We can repeat this reasoning for the successive elimination of all variables to obtain a system $\sys{a}{b}$ where the corresponding minimal unsatisfiable set has size $1$.
By construction, there exists a matrix $\matr{f}$ such that $\matr{a}=\matr{f}\matr{\widehat{a}}$ and $\vect{b}=\matr{f}\vect{\widehat{b}}$.
This means that there is a conflicting row $i$ in $\sys{a}{b}$ which is a linear combination of rows from $\subsys{\widehat{a}}{\widehat{b}}{\widehat{K}}$.
	
It remains to show that  $\rvect{f}{i} \geq 0$.
By construction, $f_{i,i} > 0$ holds, and for all $i' \in [m] \setminus \widehat{K}$ holds $f_{i,i'} = 0$.
Towards a contradiction, assume that there was some $i' \in \widehat{K}$ with $f_{i,i'} < 0$.
Since $\widehat{K}$ induces a minimal unsatisfiable subset, by Farkas' Lemma there is some other vector $\vect{f'} \in \rationals^{m\times 1}$ with $\vect{f'} \geq 0$ such that $(f'_i \neq 0 \Leftrightarrow i \in \widehat{K})$, $\vect{f'}\matr{\widehat{a}} = 0$ and $\vect{f'}\vect{\widehat{b}} < 0$.
Now let 
\[\lambda := \max\{\frac{-f_{i,j}}{f'_j}\mid f_{i,j} < 0\}\qquad \text{ and }\qquad \vect{g} := \rvect{f}{i} + \lambda \vect{f'}.\]
We get $\vect{g} \matr{\widehat{a}} = 0$ and $\vect{g}\vect{\widehat{b}} < 0$ because $\lambda > 0$.
But now, $\vect{g} \geq 0$, $g_{i'} = 0$ for all $i' \in [\widehat{m}] \setminus \widehat{K}$ and $g_{i'} = 0$ for some $i' \in \widehat{K}$, which implies that $\widehat{K}$ does not induce a minimal unsatisfiable subset.
As this is a contradiction, $\rvect{f}{i} \geq 0$ must already hold.
\end{proof}

In the next two subsections, we look at ways to prune the recursion tree, eventually leading to \Cref{algo:combined}. Additionally, the first improvement is based on insights that reveal a connection to the simplex method.

\subsection{Avoiding Redundant Checks}\label{subsec:avoiding-redundancies}

We observe that each row in a sub-problem $\sys{a}{b}$ in the recursion tree of $\texttt{FMplex}(\sys{\widehat{a}}{\widehat{b}})$ corresponds to a unique row $\hat{\imath}$ in $\sys{\widehat{a}}{\widehat{b}}$ in the sense that it is a linear combination of the rows $\{ \hat{\imath} \} \cup \mathcal{N}$ of $\sys{\widehat{a}}{\widehat{b}}$, where $\mathcal{N} \subseteq [\widehat{m}]$ corresponds to the lower/upper bounds designated as largest/smallest one to compute $\sys{a}{b}$.

The intuition is simple for a single elimination: for a restricted projection $\fmpproj{\sys{\widehat{a}}{\widehat{b}}}{j}{i}$, we have $\mathcal{N} = \{i\}$ containing only the designated bound.
Each row $\fmpproj{\sys{\widehat{a}}{\widehat{b}}}{j}{i}$ is a linear combination of the row $i$ and a unique row different from $i$.
This gives a one-to-one correspondence, which we can use to identify the next row that is added to $\mathcal{N}$ in the subsequent elimination.

\begin{thm} \label{thm:basis-nonbasis-correspondence}
	Let $\matr{\widehat{a}} \in \rationals^{\widehat{m} \times n}$ and $\vect{\widehat{b}} \in \rationals^{\widehat{m}\times 1}$.
	Let $\texttt{FMplex}(\sys{\widehat{a}}{\widehat{b}};\matr{{a}}\vect{x} \leq \vect{{b}},\matr{f})$ be a call in the recursion tree of the call $\texttt{FMplex}(\sys{\widehat{a}}{\widehat{b}})$ to \Cref{algo:base}, where $\matr{A}\in \rationals^{m\times n}$ and $\vect{b} \in \rationals^{m\times 1}$ (by construction $m\leq \widehat{m}$).

	Then there exist a set $\mathcal{N} \subseteq [\widehat{m}]$ and an injective mapping $\mathcal{B}: [m] \to [\widehat{m}]$ such that
	\begin{enumerate}
		\item $\sys{a}{b}$ is satisfiable if and only if $(\sys{\widehat{a}}{\widehat{b}}) \wedge (\submatr{\widehat{a}}{\mathcal{N}}\vect{x} = \subvect{\widehat{b}}{\mathcal{N}})$ is satisfiable,
		\item For each $i \in [m]$ holds $\{ \mathcal{B}(i) \} = \{ i' \in [\hat{m}] \mid f_{i,i'} \neq 0 \} \setminus \mathcal{N}$.
	\end{enumerate}
\end{thm}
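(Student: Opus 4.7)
I plan to prove both properties simultaneously by induction on the depth of the recursion tree. For the base case, the initial call $\texttt{FMplex}(\sys{\widehat{a}}{\widehat{b}};\sys{\widehat{a}}{\widehat{b}},\matr{e})$, I take $\mathcal{N} := \emptyset$ and $\mathcal{B} := \mathrm{id}_{[\widehat{m}]}$; both properties are immediate since the empty conjunction is vacuous and $\rvect{e}{i}$ has its only nonzero entry at position $i$.

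For the inductive step, I assume the claim for the current call with parameters $\mathcal{N}, \mathcal{B}$, and consider the recursive call on $(\matr{a'}\vect{x} \leq \vect{b'}) := \fmpproj{\sys{a}{b}}{j}{i}$ with projection matrix $\matr{f'}$. In the non-degenerate case $i \in \sbs{a}{j}$, each new row in $\matr{f'}\cdot\matr{f}$ has the form $c_1\rvect{f}{i} + c_2\rvect{f}{i'}$ with $c_2 \neq 0$ (and $c_1 = 0$ precisely when $i' \in \nbs{a}{j}$) for a unique other row index $i' \in [m]\setminus\{i\}$. I set $\mathcal{N}' := \mathcal{N}\cup\{\mathcal{B}(i)\}$ and define $\mathcal{B}'$ to send the row built from $i'$ to $\mathcal{B}(i')$; injectivity of $\mathcal{B}'$ then follows from that of $\mathcal{B}$ together with the bijection between new rows and elements of $[m]\setminus\{i\}$. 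For property (2), the inductive hypothesis places the nonzero positions of $\rvect{f}{i}$ inside $\mathcal{N}\cup\{\mathcal{B}(i)\}$ and those of $\rvect{f}{i'}$ inside $\mathcal{N}\cup\{\mathcal{B}(i')\}$, so the new row vanishes outside $\mathcal{N}' \cup \{\mathcal{B}(i')\}$; since $\mathcal{B}(i') \notin \mathcal{N}$ (by IH) and $\mathcal{B}(i') \neq \mathcal{B}(i)$ (by injectivity), the position $\mathcal{B}(i')$ receives contribution only from $c_2\rvect{f}{i'}$ and is therefore nonzero, giving exactly what is required. The degenerate case $i = \bot$ (with $\mathcal{N}' := \mathcal{N}$) is analogous and simpler.

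For property (1), the link from $\sys{a'}{b'}$ back to $\sys{a}{b}$ comes from \Cref{lma:restricted-projection}: $\alpha \models \sys{a'}{b'}$ iff $\alpha[x_j \mapsto r] \models \sys{a}{b}$ with row $i$ satisfied at equality, for $r := \alpha(\bnd{j}{\rvect{a}{i}\vect{x}\leq b_i})$. To carry this through the induction, I strengthen the claim to the equality of projections $\proj{\solset{\sys{a}{b}}}{X\setminus E} = \proj{\solset{\sys{\widehat{a}}{\widehat{b}}\land \submatr{\widehat{a}}{\mathcal{N}}\vect{x}=\subvect{\widehat{b}}{\mathcal{N}}}}{X\setminus E}$, where $E$ is the set of variables eliminated so far; this implies the satisfiability claim yet is strong enough to support the induction. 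Given any $\alpha^* \models \sys{a}{b}$ with $\rvect{a}{i}\alpha^*(\vect{x}) = b_i$, the strengthened IH produces some $\beta$ agreeing with $\alpha^*$ on $X\setminus E$ and satisfying the $\mathcal{N}$-equalities, and $\rvect{a}{i}\beta(\vect{x}) = b_i$ transfers because $\rvect{a}{i}$ has zero columns on $E$. Expanding $\rvect{a}{i} = \rvect{f}{i}\widehat{\matr{a}}$ and $b_i = \rvect{f}{i}\widehat{\vect{b}}$, and using property (2) together with the $\mathcal{N}$-equalities to cancel the $\mathcal{N}$-indexed terms, I am left with $f_{i,\mathcal{B}(i)}\bigl(\rvect{\widehat{a}}{\mathcal{B}(i)}\beta(\vect{x}) - \widehat{b}_{\mathcal{B}(i)}\bigr) = 0$, yielding the $\mathcal{B}(i)$-equality since $f_{i,\mathcal{B}(i)} \neq 0$. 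The converse direction runs the same chain in reverse. I expect the main obstacle to be pinning down the strengthened inductive hypothesis: naive solution-set equality fails because $\sys{a'}{b'}$ has a zero column for the just-eliminated $x_j$ (leaving it free) whereas the right-hand equality constraints may pin it down, so projecting to the not-yet-eliminated variables is essential.
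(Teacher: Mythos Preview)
Your approach is essentially the same as the paper's: both argue by induction along the call sequence, take $\mathcal{N}=\emptyset$ and $\mathcal{B}=\mathrm{id}$ in the base case, and in the inductive step set $\mathcal{N}':=\mathcal{N}\cup\{\mathcal{B}(i)\}$ and define $\mathcal{B}'$ via the unique ``other row'' $i'$ in each linear combination. Your treatment of property~(2) spells out what the paper simply calls obvious, and is correct.

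Where you differ is in the handling of property~(1), and here you are in fact \emph{more} careful than the paper. The paper's inductive chain contains the step
\[
(\sys{a'}{b'})\wedge(\rvect{a'}{i}\vect{x}=b'_i)\text{ sat}\ \iff\ (\sys{\widehat{a}}{\widehat{b}})\wedge(\subsyseq{\widehat{a}}{\widehat{b}}{\mathcal{N}'})\wedge(\rvect{a'}{i}\vect{x}=b'_i)\text{ sat},
\]
justified ``by the induction hypothesis''. But the hypothesis as stated in the theorem is only a satisfiability equivalence between $\sys{a'}{b'}$ and $(\sys{\widehat{a}}{\widehat{b}})\wedge(\subsyseq{\widehat{a}}{\widehat{b}}{\mathcal{N}'})$, and conjoining the same extra constraint to both sides of such an equivalence need not preserve it. Your strengthening to equality of projections onto $X\setminus E$ is exactly what is needed: since $\rvect{a}{i}$ has zero columns on~$E$, the added equality commutes with the projection, and then property~(2) lets you rewrite it as the $\mathcal{B}(i)$-equality under the $\mathcal{N}$-equalities. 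This closes the gap cleanly and yields the stated theorem as a corollary.
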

\begin{proof}
	By assumption, there exists a sequence of calls of the form $\texttt{FMplex}(\sys{\widehat{a}}{\widehat{b}};\_,\_)$ starting from $\texttt{FMplex}(\sys{\widehat{a}}{\widehat{b}};\sys{\widehat{a}}{\widehat{b}},\matr{e})$ and ending in $\texttt{FMplex}(\sys{\widehat{a}}{\widehat{b}};\matr{{a}}\vect{x} \leq \vect{{b}},\matr{f})$ such that each of these calls is called by its predecessor in the sequence.
	We prove the property by induction over this sequence.

	In the base case $\texttt{FMplex}(\sys{\widehat{a}}{\widehat{b}};\sys{\widehat{a}}{\widehat{b}},\matr{e})$, we choose $\mathcal{N} = \emptyset$ and $\mathcal{B}$ as the identity.
  Then both properties are trivially fulfilled.

	Now assume $\texttt{FMplex}(\sys{\widehat{a}}{\widehat{b}};\matr{{a'}}\vect{x} \leq \vect{{b'}},\matr{f'})$ for which an $\mathcal{N}'$ and $\mathcal{B}'$ fulfilling the desired properties exist, and let $\texttt{FMplex}(\sys{\widehat{a}}{\widehat{b}};\matr{{a}}\vect{x} \leq \vect{{b}},\matr{f})$ be the succeeding call in the sequence.
	If $\matr{{a}}\vect{x} \leq \vect{{b}} = \fmpproj{\matr{a'}\vect{x} \leq \vect{b'}}{j}{\bot}$, then we simply choose $\mathcal{N} = \mathcal{N}'$ and $\mathcal{B}=\mathcal{B}'$.
	The first property holds by \Cref{lma:restricted-projection}, the second property is trivial.

	Otherwise, $(\matr{{a}}\vect{x} \leq \vect{{b}}) = \fmpproj{\matr{a'}\vect{x} \leq \vect{b'}}{j}{i}$ for some row $i$.
	We choose $\mathcal{N} := \mathcal{N}' \cup \{ \mathcal{B}'(i) \}$ and for every row $i'$ of $\matr{{a}}\vect{x} \leq \vect{{b}}$, we choose $\mathcal{B}(i'):=\mathcal{B}'(k)$ for the unique $k$ such that the row $i'$ of $\matr{{a}}\vect{x} \leq \vect{{b}}$ is the linear combination of the rows $i$ and $k$ of $\matr{{a'}}\vect{x} \leq \vect{{b'}}$ (according to the restricted projection).
	The second property obviously holds.
	We prove the first property:
	\begin{align*}
		\matr{{a}}\vect{x} \leq \vect{{b}} \text{ satisfiable} & \iff (\matr{{a'}}\vect{x} \leq \vect{{b'}}) \wedge (\rvect{a'}{i}\;\vect{x} = b'_i) \text{ satisfiable} \\
		& \iff (\sys{\widehat{a}}{\widehat{b}}) \wedge (\submatr{\widehat{a}}{\mathcal{N}'} \vect{x} = \subvect{\widehat{b}}{\mathcal{N}'}) \wedge (\rvect{a'}{i}\;\vect{x} = b'_i)   \text{ satisfiable} \\
		& \iff (\sys{\widehat{a}}{\widehat{b}}) \wedge (\submatr{\widehat{a}}{\mathcal{N}'} \vect{x} = \subvect{\widehat{b}}{\mathcal{N}'}) \wedge (\vect{\widehat{a}}_{\mathcal{B}'(i),-}\vect{x} = \widehat{b}_{\mathcal{B}'(i)})  \text{ satisfiable} \\
		& \iff (\sys{\widehat{a}}{\widehat{b}}) \wedge (\submatr{\widehat{a}}{\mathcal{N}} \vect{x} = \subvect{\widehat{b}}{\mathcal{N}})  \text{ satisfiable}
	\end{align*}

	\noindent The first equivalence holds due to \Cref{lma:restricted-projection}, the second due to the induction hypothesis.
\end{proof}

We call the above defined set $\mathcal{N}$ the \emph{non-basis}, inspired from the analogies to the simplex algorithm (discussed in \Cref{sec:simplex}).
By the above theorem, the order in which a non-basis is constructed has no influence on the satisfiability of the induced sub-problem.
In particular:

\begin{thm} \label{thm:redundancies}
	Let $\matr{a} \in \rationals^{m \times n}$, $\vect{b} \in \rationals^{m\times 1}$, $j \in [n]$, and let $i,i' \in [m]$ with $a_{i,j} \neq 0$ and $a_{i',j} \neq 0$.
	If $\fmpproj{\matr{a}\vect{x} \leq \vect{b}}{j}{i}$ is unsatisfiable, then $\fmpproj{\matr{a}\vect{x} \leq \vect{b}}{j}{i'} \wedge  (\rvect{a}{i}\;\vect{x} = b_i) $ is unsatisfiable.
\end{thm}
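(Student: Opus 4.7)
My plan is to use Theorem~\ref{thm:basis-nonbasis-correspondence} twice to reformulate both sides of the implication as satisfiability statements about $\sys{a}{b}$ augmented by equalities, after which the statement reduces to a monotonicity observation.

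First, I would apply Theorem~\ref{thm:basis-nonbasis-correspondence} to the single-step recursion producing $\fmpproj{\sys{a}{b}}{j}{i}$ (taking $\widehat{\matr{a}}=\matr{a}$, $\vect{\widehat{b}}=\vect{b}$, and non-basis $\mathcal{N}=\{i\}$), obtaining that $\fmpproj{\sys{a}{b}}{j}{i}$ is satisfiable if and only if $(\sys{a}{b})\wedge(\rvect{a}{i}\,\vect{x}=b_i)$ is. Hence the hypothesis of the theorem is equivalent to the unsatisfiability of $(\sys{a}{b})\wedge(\rvect{a}{i}\,\vect{x}=b_i)$.

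Second, I would argue that $\fmpproj{\sys{a}{b}}{j}{i'}\wedge(\rvect{a}{i}\,\vect{x}=b_i)$ corresponds, in the sense of Theorem~\ref{thm:basis-nonbasis-correspondence}, to a sub-problem whose non-basis is $\{i',i\}$: conceptually, adjoining the equality $\rvect{a}{i}\,\vect{x}=b_i$ to $\fmpproj{\sys{a}{b}}{j}{i'}$ amounts to designating $c_i$ as a further bound, thereby extending the non-basis from $\{i'\}$ to $\{i',i\}$. The order-invariance of non-basis construction emphasized in the paragraph preceding the theorem then yields the equivalence
\[
  \fmpproj{\sys{a}{b}}{j}{i'}\wedge(\rvect{a}{i}\,\vect{x}=b_i)\ \text{sat}\ \iff\ (\sys{a}{b})\wedge(\rvect{a}{i'}\,\vect{x}=b_{i'})\wedge(\rvect{a}{i}\,\vect{x}=b_i)\ \text{sat}.
\]
Since the right-hand side is a strengthening of $(\sys{a}{b})\wedge(\rvect{a}{i}\,\vect{x}=b_i)$, its unsatisfiability is immediate from the hypothesis, and the proof closes.

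The main obstacle is justifying the second equivalence rigorously. The direction ``$\Leftarrow$'' is direct: each row of $\fmpproj{\sys{a}{b}}{j}{i'}$ is a linear combination of rows of $\sys{a}{b}$ that becomes conical once $c_{i'}$ is treated as an equality, so any solution of the strengthened system satisfies the projection and the added equation. The direction ``$\Rightarrow$'' is more delicate: a satisfying assignment of the conjunction has $x_j$ fixed via the equality for $c_i$, whereas the extension provided by Lemma~\ref{lma:restricted-projection} would set $x_j$ from the designated bound $c_{i'}$, and one must reconcile these two candidate values. This step is the analogue, for an already-constrained $x_j$, of the proof of Lemma~\ref{lma:restricted-projection}, and it is where the specific form of the combinations in the restricted projection must be leveraged carefully together with the equation for $c_i$ to verify that all original rows and both equalities are simultaneously satisfied.
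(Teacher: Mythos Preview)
Your approach matches the paper's almost verbatim: the paper applies \Cref{thm:basis-nonbasis-correspondence} to obtain unsatisfiability of $(\sys{a}{b})\wedge(\rvect{a}{i}\vect{x}=b_i)$, strengthens trivially to $(\sys{a}{b})\wedge(\rvect{a}{i}\vect{x}=b_i)\wedge(\rvect{a}{i'}\vect{x}=b_{i'})$, and then invokes \Cref{thm:basis-nonbasis-correspondence} in the other direction for the $i'$-projection before conjoining with the remaining equality. The step you flag as ``delicate'' is handled just as tersely in the paper---it simply asserts that $(\sys{a}{b})\wedge(\rvect{a}{i'}\vect{x}=b_{i'})$ is \emph{equivalent} to $\fmpproj{\sys{a}{b}}{j}{i'}$ and concludes directly, without spelling out the reconciliation of the two candidate values for $x_j$ that you rightly identify as the crux.
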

\begin{proof}
	By \Cref{thm:basis-nonbasis-correspondence}, if $\fmpproj{\matr{a}\vect{x} \leq \vect{b}}{j}{i}$ is unsatisfiable, then $(\matr{a}\vect{x} \leq \vect{b}) \wedge  (\rvect{a}{i}\vect{x} = b_i)$ is unsatisfiable, and trivially $(\matr{a}\vect{x} \leq \vect{b}) \wedge  (\rvect{a}{i}\vect{x} = b_i) \wedge (\rvect{a}{i'}\vect{x} = b_{i'})$ is unsatisfiable as well.
	Using \Cref{thm:basis-nonbasis-correspondence} in the other direction yields that $(\matr{a}\vect{x} \leq \vect{b}) \wedge (\rvect{a}{i'}\vect{x} = b_{i'})$ is equivalent to $\fmpproj{\matr{a}\vect{x} \leq \vect{b}}{j}{i'}$; thus $\fmpproj{\matr{a}\vect{x} \leq \vect{b}}{j}{i'} \wedge (\rvect{a}{i}\vect{x} = b_i )$ is unsatisfiable.
\end{proof}

Accordingly, if we know that a non-basis $\mathcal{N}$ induces an unsatisfiable sub-problem, then we also know that all supersets of $\mathcal{N}$ will do so.
We apply this insight to our algorithm:
Assume a call $\texttt{FMplex}(\sys{\widehat{a}}{\widehat{b}};\sys{a}{b},\matr{f})$, with corresponding $\mathcal{N}$ and $\mathcal{B}$, has a child call for row $i$ which returns $\unsat$ or $\partialunsat$.
Then every call in the recursion tree of $\texttt{FMplex}(\sys{\widehat{a}}{\widehat{b}};\sys{a}{b},\matr{f})$ whose corresponding non-basis contains $\mathcal{B}(i)$ will also return $\unsat$ or $\partialunsat$.
Hence, after we checked the child call for row $i$, we can ignore $\mathcal{B}(i)$ as designated bound in the remaining recursion tree of $\texttt{FMplex}(\sys{\widehat{a}}{\widehat{b}};\sys{a}{b},\matr{f})$.

\begin{exa}\label{example:nonbasis}
	Consider the system from \Cref{example:fmplex-idea}, extended by $c_5 : (-x_2 \leq 0)$, depicted in the top left of the \Cref{fig:nonbasis} below. That figure also shows a (partial) trace of our algorithm.
	In front of each row is indicated how it was constructed from the original constraints; $\mathcal{B}$ would give the row corresponding to this combination's first term.
	Note that we omit the eliminated variables and corresponding zero-columns.

	\begin{figure}[h]
		\begin{tikzpicture}
			\node[draw, align=center] (root) at (0,0) {
				Input\\
				$
				\begin{blockarray}{ccc}
					\begin{block}{c[cc]}
						c_1 &  -1 & -1 \bigstrut[t] \\
						c_2 &   0 &  -2 \\
						c_3 &  -2 &  1 \\
						c_4 &   0 &  1 \\
						c_5 &   0 &  -1 \bigstrut[b]\\
					\end{block}
				\end{blockarray}
				\cdot
				\left[ {\begin{array}{cc}
					x_1 \\ x_2
				\end{array} } \right] 
				\leq
				\begin{blockarray}{c}
					\begin{block}{[c]}
					-4 \bigstrut[t] \\
					-2 \\
					1 \\
					5 \\
					0 \bigstrut[b]\\
					\end{block}
				\end{blockarray}
			$\vspace{-1em}};

			\node[draw, align=left] (A1) at (0,-3.5) {$ $\\ \vspace{-1.2em}
			$
				\begin{blockarray}{rc}
					\begin{block}{r[c]}
						c_1 - c_5 &  -1 \bigstrut[t] \\
						\frac{1}{2}c_2 - c_5 &  0 \\
						c_3 + c_5 & -2 \\
						c_4 + c_5 &  0 \bigstrut[b]\\
					\end{block}
				\end{blockarray}
				\cdot
				\left[ {\begin{array}{c}
					x_1 
				\end{array} } \right] 
				\leq
				\begin{blockarray}{c}
					\begin{block}{[c]}
						-4 \bigstrut[t] \\
						-1 \\
						1 \\
						5 \bigstrut[b]\\
					\end{block}
				\end{blockarray}
			$};

			\node[draw, align=left] (A21) at (8.5,-3.5) {$ $\\ \vspace{-1.2em}
			$
			\begingroup
			\renewcommand*{\arraystretch}{1.25}
				\begin{blockarray}{rc}
					\begin{block}{r[c]}
						\frac{1}{2}c_2 \phantom{~+ 2c_1~} - c_5 &  0 \bigstrut[t] \\
						\frac{1}{3}c_3 - \frac{2}{3}c_1 + c_5&  0\\
						c_4 \phantom{~+ 2c_1} + c_5 & 0 \bigstrut[b]\\
					\end{block}
				\end{blockarray}
				\cdot
				\left[ {\begin{array}{c}
					x_1
				\end{array} } \right] 
				\leq
				\begin{blockarray}{c}
					\begin{block}{[c]}
						-1 \bigstrut[t] \\
						3 \\
						5 \bigstrut[b]\\
					\end{block}
				\end{blockarray}
			\endgroup
			$};

			\node[draw, align=left] (A2) at (8.5,0) {$ $\\ \vspace{-1.2em}
			$
				\begin{blockarray}{rc}
					\begin{block}{r[c]}
						\frac{1}{2}c_2 - c_1 &  1 \bigstrut[t] \\
						c_3 + c_1 &  -3\\
						c_4 + c_1 & -1 \\
						c_5 - c_1 & 1 \bigstrut[b]\\
					\end{block}
				\end{blockarray}
				\cdot
				\left[ {\begin{array}{c}
					x_1
				\end{array} } \right] 
				\leq
				\begin{blockarray}{c}
					\begin{block}{[c]}
						3 \bigstrut[t] \\
						-3 \\
						1 \\
						4 \bigstrut[b]\\
					\end{block}
				\end{blockarray}
			$};

			\draw[{Latex[length=2mm,width=2mm]}-] (A1) -- (root) node[pos=0.5,left=0.3cm, align=center] {elim. $x_2$ by $c_5$} node[pos=0.5,right=0.3cm] {$\mathcal{N} = \{5\}$};

			\draw[{Latex[length=2mm,width=2mm]}-] (A2) -- (root) node[pos=0.5,above=0.1cm, align=center] {elim. $x_2$\\by $c_1$} node[pos=0.5,below=0.3cm, align=left] {$\mathcal{N} = \{1\}$};

			\draw[{Latex[length=2mm,width=2mm]}-] (A21) -- (A2) node[pos=0.5,left=0.3cm, align=left] {elim. $x_1$ by $c_5$} node[pos=0.5,right=0.3cm, align=left] {$\mathcal{N} = \{1,5\}$};

		\end{tikzpicture}
		\caption{Algorithm trace described in \Cref{example:nonbasis}.}\label{fig:nonbasis}
	\end{figure}
	If $c_5$ is tried first as greatest lower bound on $x_2$, then the resulting partial projection contains a local conflict.
	Thus, this branch and, due to \Cref{thm:redundancies}, any non-basis containing row $5$ yields an unsatisfiable system.

	Next, we try $c_1$ as greatest lower bound on $x_2$, giving the branch on the right.
	If we now choose $(x_1 \leq 4)$, which corresponds to $c_5$ via $\mathcal{N}$ and $\mathcal{B}$, as smallest upper bound on $x_1$, another local conflict occurs (bottom right, row 1).
	As $5$ is contained in the non-basis of that system, we could have predicted this and thus avoid computing this branch.
\end{exa}

We update the $\texttt{FMplex}$ algorithm as shown in \refalgobounds using the following definition:

\begin{defi}
	The set of \emph{branch choices} for $\sys{a}{b}$ with $m$ rows w.r.t. $I \subseteq [m]$ is
	\begin{align*}
		\textit{branch\_choices}(\sys{a}{b}, I) := \{(x_j, V\setminus I) \mid (x_j, V) \in \textit{branch\_choices}(\sys{a}{b})\}.
	\end{align*}
\end{defi}

\noindent This modification clearly prevents visiting the same non-basis twice in the following sense:

\begin{thm}
	Let $\texttt{FMplex}(\sys{\widehat{a}}{\widehat{b}};\matr{a}\vect{x} \leq \vect{b},\_,\mathcal{N},\_)$ and $\texttt{FMplex}(\sys{\widehat{a}}{\widehat{b}};\matr{a'}\vect{x} \leq \vect{b'},\_,\mathcal{N}',\_)$ be two calls in the recursion tree of a call to \refalgobounds\!\!.
	Then either $\mathcal{N} \neq \mathcal{N'}$ or one of the systems occurs in the subtree below the other and only unbounded variables are eliminated between them (i.e. one results from the other by deleting some rows).\qed
\end{thm}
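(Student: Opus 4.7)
The plan is to exploit the bookkeeping provided by \Cref{thm:basis-nonbasis-correspondence}: every call in the recursion tree has an associated non-basis $\mathcal{N}$ and an injective map $\mathcal{B}$, and the non-basis grows by exactly one element $\mathcal{B}(i)$ each time a designated bound $i$ is picked, and does not change when an unbounded variable is eliminated (the $\bot$ case). I would first record this monotonicity: along any root-to-leaf path $c_0 \to c_1 \to \ldots \to c_\ell$, the sequence $\mathcal{N}(c_0) \subseteq \mathcal{N}(c_1) \subseteq \ldots \subseteq \mathcal{N}(c_\ell)$ is weakly increasing, with equality at a step if and only if that step corresponds to an unbounded-variable elimination (restricted projection with $i = \bot$).

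Now, suppose the two given calls in the theorem correspond to non-bases $\mathcal{N}$ and $\mathcal{N}'$ with $\mathcal{N} = \mathcal{N}'$; we must show one call lies in the subtree of the other and that every intermediate step is an unbounded-variable elimination. I would split into two cases. If one call is a descendant of the other, monotonicity forces $\mathcal{N}$ to stay constant along the connecting path, and by the equivalence above every intermediate step must be an unbounded elimination, which structurally amounts to deleting rows (those with $a_{i,j}\neq 0$ are all in $\nbs{a}{j}^c$ and get removed), giving exactly the second alternative of the theorem.

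The interesting case is when neither call is a descendant of the other. Let $c_0$ be the lowest common ancestor, and let $c$ sit below the child of $c_0$ obtained by designating some bound $i_1$, while $c'$ sits below the child obtained by designating some later bound $i_2$ (``later'' in the order of the \texttt{for} loop of \refalgobounds). Then $\mathcal{N}(c) \supseteq \mathcal{N}(c_0) \cup \{\mathcal{B}_{c_0}(i_1)\}$ and, by injectivity of $\mathcal{B}_{c_0}$, the elements $\mathcal{B}_{c_0}(i_1)$ and $\mathcal{B}_{c_0}(i_2)$ are distinct. The crux is the modification introduced in \refalgobounds via the argument $I$: once the subtree rooted at $c_0$'s $i_1$-child has been processed, the index corresponding to $\mathcal{B}_{c_0}(i_1)$ is added to $I$ and this exclusion is propagated into every subsequent child's subtree. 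Hence in the entire subtree containing $c'$, the row index tracking $\mathcal{B}_{c_0}(i_1)$ is never chosen as a designated bound, so $\mathcal{B}_{c_0}(i_1) \notin \mathcal{N}(c')$. This contradicts $\mathcal{N}(c) = \mathcal{N}(c')$ and closes the case.

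The main technical obstacle I anticipate is pinning down how the exclusion set $I$ at node $c_0$ translates, under the restricted projections taken going down to $c'$, into the local index sets used inside deeper recursive calls. To handle this cleanly, I would phrase the propagation invariant entirely in terms of the global non-basis: namely, that for any descendant $d$ of $c_0$ reached after $c_0$'s $i_1$-child has returned, $\mathcal{B}_{c_0}(i_1) \notin \mathcal{N}(d)$. This invariant is preserved by both kinds of recursive steps (since $\mathcal{B}$-injectivity and the exclusion rule ensure no later designated bound reintroduces $\mathcal{B}_{c_0}(i_1)$), and it is exactly what is needed to derive the contradiction above.
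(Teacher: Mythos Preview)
The paper does not actually give a proof of this theorem; it is stated immediately after the sentence ``This modification clearly prevents visiting the same non-basis twice in the following sense'' and closed with a bare \qed. Your proposal is a correct and complete fleshing-out of the intended argument: the monotonicity of $\mathcal{N}$ along root-to-leaf paths, the case split on the ancestor relationship, and---in the lowest-common-ancestor case---the use of the exclusion set $I$ to show that $\mathcal{B}_{c_0}(i_1)$ can never enter the non-basis anywhere in the $i_2$-subtree. Two small points you should make explicit when writing it up: (i) the LCA $c_0$ necessarily branches on a bounded variable (a $\bot$-branch has a single child, so two distinct children force $i_1,i_2\neq\bot$); and (ii) $\mathcal{B}_{c_0}(i_1)\notin\mathcal{N}(c_0)$ holds by part~(2) of \Cref{thm:basis-nonbasis-correspondence}, which is what guarantees that $\mathcal{B}_{c_0}(i_1)\in\mathcal{N}(c)\setminus\mathcal{N}(c')$ really distinguishes the two non-bases. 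Your propagation invariant (that $I$ only grows along any downward path, so the exclusion of $\mathcal{B}_{c_0}(i_1)$ persists) is exactly right and resolves the technical obstacle you anticipated.
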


\Cref{thm:unique-base-termination} states that, still, every initial call to \refalgobounds terminates with $\sat$ or a global conflict.
This follows by a slight modification of the proof of \Cref{thm:global-conflict}.

\begin{thm}\label{thm:unique-base-termination}
	Let $\matr{\widehat{a}} \in \rationals^{\widehat{m} \times n}$, and $\vect{\widehat{b}} \in \rationals^{\widehat{m}\times 1}$.
	Then $\sys{\widehat{a}}{\widehat{b}}$ is unsatisfiable if and only if the call $\texttt{FMplex}(\sys{\widehat{a}}{\widehat{b}})$ to \refalgobounds terminates with a global conflict.
\end{thm}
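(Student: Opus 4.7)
The plan is to closely follow the proof of \Cref{thm:global-conflict}, with only targeted modifications to accommodate the pruning mechanism encoded by the parameter $I$.

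The $\Leftarrow$ direction is unchanged: if the algorithm returns \unsat\ because of a global conflict in some row $i$, then $\rvect{a}{i}=0$, $b_i<0$, and $\rvect{f}{i}\geq 0$, so Farkas' Lemma immediately gives the unsatisfiability of $\sys{\widehat{a}}{\widehat{b}}$. None of this reasoning depends on $I$ or on which branches were pruned.

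For the $\Rightarrow$ direction, assume $\sys{\widehat{a}}{\widehat{b}}$ is unsatisfiable and fix a minimal unsatisfiable subset $\widehat{K}\subseteq[\widehat{m}]$. We proceed by strong induction on the number of variables remaining to be eliminated. As in the proof of \Cref{thm:global-conflict}, we build a chain of recursive calls by designating, at each step, a row whose image under $\mathcal{B}$ lies in $\widehat{K}$; by \Cref{thm:basis-nonbasis-correspondence} combined with the original argument, such a choice preserves a (strictly smaller) minimal unsatisfiable subset in the resulting sub-problem. The new ingredient is that at a call with pruning set $I$, only branch indices $i$ with $\mathcal{B}(i)\notin I$ are available. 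The central claim is that, provided the algorithm has not already returned \unsat, there is always some allowed $i$ with $\mathcal{B}(i)\in \widehat{K}$. Indeed, suppose every row $i$ with $\mathcal{B}(i)\in\widehat{K}$ that would otherwise be a valid branch choice has $\mathcal{B}(i)\in I$. Since $\mathcal{B}(i)$ enters $I$ only after the sibling branch for $i$ has already been explored, each such row corresponds to a previously processed subtree. By \Cref{thm:redundancies} together with \Cref{thm:basis-nonbasis-correspondence}, the sub-problem of each such sibling is unsatisfiable, as it corresponds to augmenting $\sys{\widehat{a}}{\widehat{b}}$ with equality constraints derived from $\widehat{K}$. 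The inductive hypothesis, applied to a subtree with strictly fewer remaining variables, then forces that sibling to have returned \unsat{} with a global conflict, which would have caused the algorithm to return \unsat\ at once, contradicting the assumption that branch iteration is still ongoing.

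The main obstacle is setting up the induction so that the hypothesis genuinely applies to sibling subtrees: we must carefully track how $\mathcal{N}$, $\mathcal{B}$ and the accumulating $I$ evolve along the recursion tree, and verify that augmenting with an equality coming from a row of $\widehat{K}$ preserves a minimal unsatisfiable subset of the augmented system (so that the induction is fed with unsatisfiable inputs of strictly smaller variable count). Once this bookkeeping is in place, iterating the construction as in \Cref{thm:global-conflict} produces a leaf of the recursion where $\widehat{K}$ has been collapsed into a single row realizing a global conflict, completing the proof.
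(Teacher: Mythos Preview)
Your approach via strong induction on the number of remaining variables is genuinely different from the paper's. The paper does not induct at all: it fixes the \emph{earliest} path in the recursion tree of \refalgobase that reaches a global conflict (with associated minimal set $\widehat{K}$), observes that along this path only elements of $\widehat{K}$ enter the non-basis, and then argues by contradiction that \refalgobounds can follow the same path. If some $k\in\widehat{K}$ were blocked by $I$, an earlier-processed sibling whose parent lies on the path would itself have non-basis contained in $\widehat{K}$, hence would admit a global-conflict path from the same $\widehat{K}$ that is strictly earlier---contradicting the choice of the earliest path.

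Your sketch has a real gap exactly where you flag the ``main obstacle'': the inductive hypothesis you invoke is never stated, and the obvious candidate (the theorem for systems with fewer variables) concerns \emph{initial} calls with $I=\emptyset$, whereas the sibling you apply it to is a \emph{recursive} call with an inherited $I$. A recursive call on an unsatisfiable sub-problem may legitimately return \partialunsat, so unsatisfiability alone cannot force it to return \unsat. To make the induction go through you would need a strengthened hypothesis such as ``every recursive call whose non-basis is contained in $\widehat{K}$ returns \unsat'', and then show this is preserved under arbitrary inherited $I$ and arbitrary heuristic choices of $(x_j,V)$---in particular arguing that either the chosen $V$ contains an unblocked row mapping into $\widehat{K}$, or else falling back on the hypothesis for an even earlier sibling. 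This is workable but is precisely the bookkeeping you defer; the paper's earliest-path argument sidesteps it entirely by replacing the induction with a single minimality contradiction.
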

\begin{proof}
	Consider the proof of \Cref{thm:global-conflict} again: We showed that there exists a path in the recursion tree leading to a global conflict.
	We now show that we can still construct such a path.
	We first fix the earliest path in the recursion tree that leads \Cref{algo:base} to a global conflict, which does exist according to \Cref{thm:global-conflict}; let $\widehat{K} \subseteq [\widehat{m}]$ be the corresponding subset inducing the minimal unsatisfiable subset.
	Along this path, all but one element of $\widehat{K}$ are chosen into the non-basis, as can be deduced from the proof of \Cref{thm:global-conflict}.
	Without loss of generality, we can assume that this path does not insert anything else into the non-basis (otherwise, it would not be the earliest global conflict, or there would be variables not occurring in the conflict, whose elimination is irrelevant).

	We now execute \refalgobounds on the same input.
	Towards a contradiction, we assume that there is a state on the given path where the choice to the next element is forbidden in \refalgobounds\!\!, that is, an element $k \in \widehat{K}$ is ignored ($k \in I$).
	This is only possible if there has been an earlier call in the execution trace of \refalgobounds where $k$ was added to the non-basis, this call returned $\partialunsat$, and its parent call is on the given path (otherwise, $k$ would already have been removed from the set of ignored rows). 
	But then, still only elements from $\widehat{K}$ were in the non-basis of that call.
	But this means, there would be a earlier path that leads to the same global conflict $\widehat{K}$, which is a contradiction to the above assumption.
\end{proof}

\begin{algorithm}[p]%
	\linespread{1.2}\selectfont
	\caption{$\texttt{FMplex}(\sys{\widehat{a}}{\widehat{b}};\sys{a}{b},\matr{f},$\chng{$\mathcal{N},\mathcal{B},I$}$,$\chngd{$\texttt{lvl},\texttt{bt\_lvl}$}$)$}\label{algo:combined}

	  \begin{description}
		\item[Algorithm 2a] \label{algo:combined-base} The base method consists of the plain parts (as in \Cref{algo:base}).
		\item[Algorithm 2b] \label{algo:leaving-out-bounds} Consists of the base method and the \chng{framed parts}.
		\item[Algorithm 2c] \label{algo:backtracking} Consists of the base method, the \chng{framed} and the \chngd{filled} parts.
	\end{description}
	\hrulefill\\
	\SetKwInOut{Data}{Data}
	\SetKwInOut{Input}{Input}
	\SetKwInOut{Output}{Output}
	\DontPrintSemicolon
	\SetKwFunction{FCheck}{FMplex}
	\SetKwProg{Fn}{Function}{:}{}
	\Data{$\matr{\widehat{a}} \in \rationals^{\widehat{m} \times n},\ \vect{\widehat{b}}\in \rationals^{\widehat{m}}$}
	\Input{$\matr{a} \in \rationals^{m \times n},\ \vect{b}\in \rationals^{m}$, $\matr{f} \in \rationals^{m \times \widehat{m}}$ s.t. $\matr{f}\matr{\widehat{a}} = \matr{a}$ and $\matr{f}\vect{\widehat{b}} = \vect{b}$, \chng{$\mathcal{N}\subseteq [\widehat{m}]$, $\mathcal{B}\colon [m] \rightarrow [\widehat{m}]$, $I\subseteq [\widehat{m}]$}, \chngd{$\texttt{lvl} \in [n] \cup \{ 0 \}$, $\texttt{bt\_lvl}\colon [m] \to [n] \cup \{ 0 \}$}}
	\Output{(\sat, $\alpha$) with $\alpha \models \matr{a}\vect{x} \leq \vect{b}$, or (\unsat, $K$) where $K \subseteq [\widehat{m}]$, or\\
	(\partialunsat, \chngd{$l, K$}) \chngd{where $l \in [n]$ and $K \subseteq [\widehat{m}]$}\\\hspace*{-4em}\hrulefill}
	\lIf{$\matr{a} = 0 \land \vect{b} \geq 0$} {
		\Return{$($\sat, $())$}\tcp*[f]{trivial sat}
	}
	\If(\tcp*[f]{global conflict}){$\exists i \in [m].\; \rvect{a}{i} = 0 \land b_i < 0 \land \rvect{f}{i} \geq 0$}{
			\Return{$(\unsat, \{ i' \mid f_{i,i'} \neq 0 \})$} 
	} 
	\boxchngd{a}{0.92}
	\boxchngdstart{a}
	\If(\tcp*[f]{local conflict}){$\exists i \in [m].\; \rvect{a}{i} = 0 \land b_i < 0 \land \rvect{f}{i} \ngeq 0$}{
		$i := \arg\min_{i \in [m]}\{ \texttt{bt\_lvl}(i) \mid \rvect{a}{i} = 0 \land b_i < 0 \}$\;
		\Return $(\partialunsat,\texttt{bt\_lvl}(i)-1,\{ i' \mid f_{i,i'} \neq 0 \})$ \;
	}
	\boxchngdend{a}%
	\chngd{$K := \emptyset$} \;
	\textbf{choose} $(x_j,V) \in \textit{branch\_choices}(\matr{a}\vect{x} \leq \vect{b},\chng{$\{ \mathcal{B}^{-1}(i) \mid i \in I \}$})$\;
	\ForEach{$i \in V$}{
		\textbf{compute} $\matr{a'}\vect{x} \leq \vect{b'} := \fmpproj{\matr{a}\vect{x} \leq \vect{b}}{j}{i}$ with projection matrix $\matr{f'}$ \chngd{and backtrack levels $\texttt{bt\_lvl}'$} \;
		\chng{\textbf{compute} $\mathcal{N}'$ := ($\mathcal{N} \cup \{ \mathcal{B}(i) \}$ \textbf{ if }  $i \neq \bot$ \textbf{ else }   $\mathcal{N}$) and $\mathcal{B}'$ from $\mathcal{N}'$ and $F'$}\;
		\Switch{\FCheck$(\sys{\widehat{a}}{\widehat{b}};\matr{a'}\vect{x} \leq \vect{b'},\matr{f'}\matr{f},$\chng{$\mathcal{N}', \mathcal{B}', I$}$,\chngd{$\texttt{\upshape lvl}+1,\texttt{\upshape bt\_lvl}'$})$}{
			\lCase{$(\unsat, K')$}{\Return{$(\unsat, K')$}}
			\lCase{$(\sat, \alpha)$}{\Return{$(\sat, \alpha[x_j \mapsto r])$} for a suitable $r \in \rationals$}
			\boxchngd{b}{0.87}
			\boxchngdstart{b}
			\Case{$(\partialunsat, l, K')$}{
				\lIf{$l<\texttt{\upshape lvl}$}{
					\Return{$(\partialunsat, l, K')$}\tcp*[f]{backtrack} %
				}
				\lElse {
					$K := K \cup K'$ %
				}
			}
		}
		\boxchngdend{b}	\chng{$I$ := $I \cup \{ \mathcal{B}(i) \}$} \;\label{algo:backtracking:ignore}
	}
	\chngd{
	\label{algo:combined:lvl0} \lIf{$\texttt{\upshape lvl}=0$}{
		\Return{$(\unsat, K)$}
	}}\;
	\Return{$(\partialunsat, \chngd{\texttt{\upshape lvl-1}, K})$}

  \end{algorithm}

\subsection{Backtracking of Local Conflicts}

So far, we ignored local conflicts that witness the unsatisfiability of a given sub-problem.
In this section, we will cut off parts of the search tree based on local conflicts and examine the theoretical implications.

We applied Farkas' Lemma on conflicting rows in some sub-problem that are positive linear combinations of rows from the input system.
We can also apply Farkas' Lemma to conflicting rows which are positive linear combinations of some \emph{intermediate} system to conclude the unsatisfiability of the latter.
Whenever such a conflict occurs, we can backtrack to the parent system of that unsatisfiable system.
Instead of tracking the linear combinations of every row in terms of the rows of each preceding intermediate system, we can do an incomplete check: If a conflicting row was computed only by addition operations, then it is a positive linear combination of the involved rows.
Thus, we assign to every intermediate system a level, representing its depth in the search tree and store for every row the level where the last subtraction was applied to the row (i.e. a lower (upper) bound was subtracted from another lower (upper) bound).
If a row is conflicting, we can conclude that the intermediate system at this level is unsatisfiable, thus we can jump back to its parent.

Assume the current system is $\matr{a}\vect{x} \leq \vect{b}$ at level $\texttt{lvl}$ with $m$ rows whose backtracking levels are stored in $\texttt{bt\_lvl}: [m] \to ([n] \cup \{ 0 \})$.
If $\texttt{lvl}=0$, then $\texttt{bt\_lvl}$ maps all values to $0$.
When computing $\fmpproj{\matr{a}\vect{x} \leq \vect{b}}{j}{i}$ for some $x_j$ and $i$ with projection matrix $\matr{f}$, the backtracking levels of the rows in the resulting system $\matr{FA}\vect{x} \leq \matr{F}\vect{b}$ are stored in $\texttt{bt\_lvl}'$ where for each row $i''$
\[\texttt{bt\_lvl}'(i'') := \left.\begin{cases}
	\max\{\texttt{bt\_lvl}(i),\texttt{bt\_lvl}(i')\} & \text{ if } f_{i'',i}, f_{i'',i'} > 0 \text { and } f_{i'',k}=0,\  k \notin \{ i,i' \} \\
	\texttt{lvl} + 1 &  \text{ otherwise}.
\end{cases}\right.\]

The backtracking scheme is given in \refalgobacktracking\!, which returns additional information in the $\partialunsat$ case, that is the backtrack level $l$ of the given conflict, and a (possibly non-minimal) unsatisfiable subset $K$.

\begin{thm}
	Let $\texttt{FMplex}(\_;\matr{a}\vect{x} \leq \vect{b},\_,\_,\_,\texttt{\upshape lvl},\_)$ be a call to \refalgobacktracking\!, and consider a second call $\texttt{FMplex}(\_;\matr{a'}\vect{x} \leq \vect{b'},\_,\_,\_,\_,\texttt{\upshape bt\_lvl}')$ in the recursion tree of the first call.
	If $\matr{a'}\vect{x} \leq \vect{b'}$ has a local conflict in a row $i$ with $\texttt{\upshape bt\_lvl}'(i)=\texttt{\upshape lvl}$, then $\matr{a}\vect{x} \leq \vect{b}$ is unsatisfiable.
\end{thm}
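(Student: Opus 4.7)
The plan is to produce a non-negative linear combination of the rows of $\matr{a}\vect{x}\leq\vect{b}$ that collapses to the conflicting row of $\matr{a'}\vect{x}\leq\vect{b'}$, and then invoke Farkas' Lemma (\Cref{thm:farkas}). Concretely, if I can exhibit $\vect{g}\geq 0$ with $\vect{g}\matr{a}=\rvect{a'}{i}=\vect{0}$ and $\vect{g}\vect{b}=b'_i<0$, then \Cref{thm:farkas} immediately yields unsatisfiability of $\matr{a}\vect{x}\leq\vect{b}$.

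To build $\vect{g}$, I would prove the following invariant by induction on the recursion depth below the given call at level $\texttt{lvl}$. Consider any descendant call in its subtree, say at level $l''$ with system $\matr{a''}\vect{x}\leq\vect{b''}$, and any row $r$ of that system. For every level $l$ with $\texttt{bt\_lvl}(r)\leq l\leq l''$, the row $r$ is a non-negative combination of the rows of the system at level $l$ along the path from the given call to the descendant. The base case $l=l''=\texttt{lvl}$ is witnessed by a unit vector. In the inductive step, the $\texttt{bt\_lvl}'$ update rule does the book-keeping: a pairing of a lower with an upper bound produces a new row whose two relevant entries in the projection matrix $\matr{f'}$ are strictly positive and whose $\texttt{bt\_lvl}'$ equals the maximum of the two parents' levels; same-type pairings (with mixed signs in $\matr{f'}$) and carry-overs (with a single nonzero entry in $\matr{f'}$) force $\texttt{bt\_lvl}'$ up to $l''+1$, where the invariant holds trivially by a unit vector. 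Instantiating the invariant at $l=\texttt{lvl}$ for the conflicting row $i$ of $\matr{a'}\vect{x}\leq\vect{b'}$ then yields the desired $\vect{g}$.

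The main obstacle is calibrating the invariant so that the mixed-bound case closes. Given parent rows $r_a,r_b$ combined with coefficients $\alpha,\beta>0$, the new $\texttt{bt\_lvl}$ equals $l_{\max}:=\max\{\texttt{bt\_lvl}(r_a),\texttt{bt\_lvl}(r_b)\}$, and producing a non-negative representation for the resulting row at any $l\in[l_{\max},l'']$ requires expressing \emph{both} parents at the same level $l$. The na\"ive invariant that asserts a representation only at the single level $\texttt{bt\_lvl}(r)$ fails here, because $r_a$ and $r_b$ may carry different $\texttt{bt\_lvl}$s and representations at distinct levels cannot be added. The stronger invariant above, asserting representations uniformly for every $l$ in the interval $[\texttt{bt\_lvl}(r),l'']$, lets the induction hypothesis lift both $r_a$ and $r_b$ to level $l$, after which scaling by $\alpha,\beta>0$ and summing gives the required representation for the new row; identifying and justifying this strengthening is the delicate step of the argument.
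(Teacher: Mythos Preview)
Your proposal is correct and takes essentially the same approach as the paper: the paper's proof simply asserts ``by construction of $\texttt{bt\_lvl}'$, $\rvect{a'}{i}\vect{x}\leq b'_i$ is a positive sum of rows from $\matr{a}\vect{x}\leq\vect{b}$'' and then invokes Farkas' Lemma, and your inductive argument is precisely the rigorous justification of that ``by construction'' claim. Your observation that the invariant must be strengthened to all levels in the interval $[\texttt{bt\_lvl}(r),l'']$---so that two parent rows with different backtrack levels can both be lifted to the common level $\max\{\texttt{bt\_lvl}(r_a),\texttt{bt\_lvl}(r_b)\}$---is exactly the subtlety that makes the induction close, and it is glossed over in the paper's one-line proof.
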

\begin{proof}
	By construction of $\texttt{bt\_lvl}$', $\rvect{a'}{i}\;\vect{x} \leq b'_i$ is a positive sum of rows from $\matr{a}\vect{x} \leq \vect{b}$, i.e. there exists an $\vect{f} \in \rationals^{1\times m}$ such that $(\vect{f} \matr{a} \vect{x} \leq  \vect{f} \vect{b}) = (\rvect{a'}{i}\;\vect{x} \leq b'_i)$.
	Then by Farkas' Lemma, $\matr{a}\vect{x} \leq \vect{b}$ is unsatisfiable.
\end{proof}

While it is complete and correct, \refalgobacktracking does not always terminate with a \emph{global} conflict (i.e. \Cref{thm:global-conflict} does not hold anymore), explaining the need for \Cref{algo:combined:lvl0}:

\begin{exa}\label{example:backtracking}
	We use \refalgobacktracking to check the satisfiability of the system depicted in the top left of \Cref{fig:backtracking} below.
	That figure also shows a trace of our algorithm, and to the left of each row is indicated how it was constructed from the original constraints.

	\begin{figure}[h]
		\begin{tikzpicture}
			\node[draw, align=center] (root) at (0,0) {
				Input, $\texttt{lvl} = 0$\\
				$
				\begin{blockarray}{cccc}
					\begin{block}{c[ccc]}
						c_1 &  1 & -1 & -1 \bigstrut[t] \\
						c_2 &  0 &  0 & -1 \\
						c_3 &  0 & -1 &  1 \\
						c_4 & -1 &  1 &  0 \\
						c_5 &  1 &  0 &  0 \bigstrut[b]\\
					\end{block}
				\end{blockarray}
				\cdot
				\left[ {\begin{array}{ccc}
					x_1 \\ x_2 \\ x_3
				\end{array} } \right] 
				\leq
				\begin{blockarray}{c}
					\begin{block}{[c]}
					0 \bigstrut[t] \\
					0 \\
					0 \\
					-1 \\
					-1 \bigstrut[b]\\
					\end{block}
				\end{blockarray}
			$\vspace{-1em}};

			\node[draw, align=left] (A1) at (0,-3.5) {$ $\\ \vspace{-1.2em}
			$
				\begin{blockarray}{lcc}
					\begin{block}{l[cc]}
						c_2 - c_1 &  -1 & 1 \bigstrut[t] \\
						c_3 + c_1 &  1 & -2 \\
						c_4 & -1 &  1 \\
						c_5 &  1 & 0 \bigstrut[b]\\
					\end{block}
				\end{blockarray}
				\cdot
				\left[ {\begin{array}{ccc}
					x_1 \\ x_2
				\end{array} } \right] 
				\leq
				\begin{blockarray}{c}
					\begin{block}{[c]}
						0 \bigstrut[t] \\
						0 \\
						-1 \\
						-1 \bigstrut[b]\\
					\end{block}
				\end{blockarray}
			$};

			\node[draw, align=left] (A11) at (0,-6.25) {$ $\\ \vspace{-1.2em}
			$
				\begin{blockarray}{lc}
					\begin{block}{l[c]}
						2c_2 - c_1 + c_3 & - 1 \bigstrut[t] \\
						2c_4 + c_1 + c_3 & - 1 \\
						\phantom{2}c_5 &  1 \bigstrut[b]\\
					\end{block}
				\end{blockarray}
				\cdot
				\left[ {\begin{array}{ccc}
					x_1
				\end{array} } \right] 
				\leq
				\begin{blockarray}{c}
					\begin{block}{[c]}
						0 \bigstrut[t] \\
						-2 \\
						-1 \bigstrut[b]\\
					\end{block}
				\end{blockarray}
			$};

			\node[draw, align=left] (A111) at (8.75,-6.25) {$ $\\ \vspace{-1.2em}
			$
				\begin{blockarray}{lc}
					\begin{block}{l[c]}
						2c_4 \phantom{+c_3} + 2c_1 - 2c_2& 0 \bigstrut[t] \\
						\phantom{2}c_5 + c_3 - c_1 + 2c_2 & 0 \bigstrut[b]\\
					\end{block}
				\end{blockarray}
				\leq
				\begin{blockarray}{c}
					\begin{block}{[c]}
						-2 \bigstrut[t]\\
						-1 \bigstrut[b]\\
					\end{block}
				\end{blockarray}
			$};

			\node[draw, align=left] (A2) at (8.5,0) {$ $\\ \vspace{-1.2em}
			$
				\begin{blockarray}{lcc}
					\begin{block}{l[cc]}
						c_1 - c_2 &  1 & -1 \bigstrut[t] \\
						c_3 + c_2 &  0 & -1 \\
						c_4 & -1 &  1 \\
						c_5 &  1 &  0 \bigstrut[b]\\
					\end{block}
				\end{blockarray}
				\cdot
				\left[ {\begin{array}{ccc}
					x_1 \\ x_2
				\end{array} } \right] 
				\leq
				\begin{blockarray}{c}
					\begin{block}{[c]}
						0 \bigstrut[t] \\
						0 \\
						-1 \\
						-1 \bigstrut[b]\\
					\end{block}
				\end{blockarray}
			$};

			\node[draw, align=left] (A21) at (8.5,-3) {$ $\\ \vspace{-1.2em}
			$
				\begin{blockarray}{rc}
					\begin{block}{r[c]}
						c_1 - c_2 + c_4 & 0 \bigstrut[t] \\
						c_3 + c_2 \phantom{~ + c_4} & -1 \\
						c_5 + \phantom{c_2 + ~} c_4 &  1 \bigstrut[b]\\
					\end{block}
				\end{blockarray}
				\cdot
				\left[ {\begin{array}{ccc}
					x_1
				\end{array} } \right] 
				\leq
				\begin{blockarray}{c}
					\begin{block}{[c]}
						-1 \bigstrut[t] \\
						0 \\
						-2 \bigstrut[b]\\
					\end{block}
				\end{blockarray}
			$};

			\draw[{Latex[length=2mm,width=2mm]}-] (A1) -- (root) node[pos=0.5,left=0.3cm, align=center] {elim. $x_3$ by $c_1$} node[pos=0.5,right=0.3cm] {$\mathcal{N} = \{1\}$};

			\draw[{Latex[length=2mm,width=2mm]}-] (A2) -- (root) node[pos=0.5,above=0.1cm, align=center] {elim. $x_3$\\by $c_2$} node[pos=0.5,below=0.3cm, align=left] {$\mathcal{N} = \{2\}$};

			\draw[{Latex[length=2mm,width=2mm]}-] (A11) -- (A1) node[pos=0.5,left=0.3cm, align=left] {elim. $x_2$ by $c_3$} node[pos=0.5,right=0.3cm, align=left] {$\mathcal{N} = \{1,3\}$};

			\draw[{Latex[length=2mm,width=2mm]}-] (A21) -- (A2) node[pos=0.5,left=0.3cm, align=left] {elim. $x_1$ by $c_4$} node[pos=0.5,right=0.3cm, align=left] {$\mathcal{N} = \{2,4\}$};

			\draw[{Latex[length=2mm,width=2mm]}-] (A111) -- (A11) node[pos=0.5,above=0.1cm, align=center] {elim. $x_1$\\by $c_2$} node[pos=0.5,below=0.3cm, align=left] {$\mathcal{N} = \{1,3,2\}$};
		\end{tikzpicture}
		\caption{Algorithm trace described in \Cref{example:backtracking}.}\label{fig:backtracking}
	\end{figure}

	We start by eliminating $x_3$, using the lower bounds given by rows 1 and 2.
	First, $c_1$ is tried as greatest lower bound, leading to the branch on the left.
	Following that path leads to the system in the bottom right, which contains two local conflicts (local, because the linear combinations contain negative coefficients).
	The first row has backtrack-level 3, as it resulted from combining two lower bounds on level 2.
	The second row has backtrack-level 1, as it resulted from the system on level 1 by only combining lower with upper bounds.
	This means that the ancestor call at level $1$ is unsatisfiable, and thus we jump back to level $0$.

	The second branch for $x_3$ is visited (on the right), and we now first eliminate $x_1$ instead of $x_2$.
	There is only one lower bound, the row corresponding to $c_4$, and the resulting child system contains a local conflict, so this branch returns $\partialunsat$.
	Now, both branches of the initial call returned $\partialunsat$, but no global conflict was found.

	Note that the optimization from \Cref{subsec:avoiding-redundancies} is irrelevant in this example; backjumping alone suffices to break \Cref{thm:global-conflict}.
\end{exa}

\subsection{Strict Constraints}\label{sec:strict}
So far, we only considered conjunctions of weak linear constraints as input.
In practice, however, we often want to admit strict constraints. In the context of SMT solving, strict constraints cannot be avoided, as they appear as the negation of weak constraints which need to be satisfied due to the Boolean structure.
For example, an input $\sys{a}{b} \land \lnot (\vect{c}\vect{x} \leq d)$ is equivalent to the system $\sys{a}{b} \land (-\vect{c}\vect{x} < -d)$.

For the Fourier-Motzkin method, handling strict constraints is straightforward:
Upper and lower bounds are combined as before, with the addition that the combination of two constraints $c_1$ and $c_2$ is a strict constraint if and only if at least one of $c_1$, $c_2$ is strict.
The intuition here is that from $(l < x) \land (x \leq u)$ follows $(l < u)$, but not necessarily $l \leq u$.

Transferring this idea to \texttt{FMplex} is only straightforward for combinations of lower and upper bounds, but we cannot simply apply the same reasoning for combinations of two lower or two upper bounds.
The following example shows how this would lead to incorrect results.

\begin{exa}\label{example:strict}
    Consider the algorithm trace given below in \Cref{fig:strict}.
    The first constraint ($c_1$) is strict.
    If we therefore assume that the combination of $c_1$ and $c_2$ must yield a strict constraint, then we end up with a trivially false constraint $(0<0)$ after eliminating both variables $x_1$ and $x_2$.
    This constraint is a positive linear combination of the input and thus the algorithm would report \unsat.
    This is incorrect, as $x_1 = x_2 = 1$ is indeed a model.
    
    Note that the alleged global conflict $(0<0)$ is actually a linear combination of only the weak constraints $c_2$ and $c_3$, and the coefficients for the strict constraint $c_1$ cancel out in the elimination of $x_2$.
    But $(0 < 0)$ is not a consequence of $c_2$ and $c_3$ and the linear combination actually equates to $(0 \leq 0)$.

    \begin{figure}[h]
		\begin{tikzpicture}
			\node[draw, align=center] (root) at (0,0) {
				Input\\
				$
				\begin{blockarray}{ccc}
					\begin{block}{c[cc]}
						c_1 &  -1 &  0 \bigstrut[t] \\
						c_2 &  -1 &  1 \\
						c_3 &   1 & -1 \bigstrut[b]\\
					\end{block}
				\end{blockarray}
				\cdot
				\left[ {\begin{array}{cc}
					x_1 \\ x_2
				\end{array} } \right] 
				\begin{blockarray}{cc}
					\begin{block}{c[c]}
                        <    & 0 \bigstrut[t] \\
                        \leq & 0 \\
                        \leq & 0 \bigstrut[b]\\
					\end{block}
				\end{blockarray}
			$\vspace{-1em}};

			\node[draw, align=left] (A1) at (6.3,-0.4) {$ $\\ \vspace{-1.2em}
			$
				\begin{blockarray}{rr}
					\begin{block}{r[r]}
						c_2 - c_1 &  1 \cdot x_2 \bigstrut[t] \\
						c_3 + c_1 & -1 \cdot x_2 \bigstrut[b]\\
					\end{block}
				\end{blockarray}
				\begin{blockarray}{cc}
					\begin{block}{c[c]}
						< & 0 \bigstrut[t] \\
						< & 0 \bigstrut[b]\\
					\end{block}
				\end{blockarray}
			$};

            \node[draw, align=left] (A2) at (11.3,-0.4) {$c_3 + c_2 [0 < 0]$};

			\draw[{Latex[length=2mm,width=2mm]}-] (A1) -- (root) node[pos=0.35,above=0.1cm, align=left] {elim. $x_1$\\by $c_1$};

			\draw[{Latex[length=2mm,width=2mm]}-] (A2) -- (A1) node[pos=0.35,above, align=left] {elim. $x_2$\\by $c_2$};
		\end{tikzpicture}
		\caption{Algorithm trace described in \Cref{example:strict}.}\label{fig:strict}
	\end{figure}
\end{exa}

Instead, we use a well-known way of accomodating strict constraints described e.g. in described in \cite{nalbach2021extending} and \cite{king_effective_2014}, which is to transform them into weak constraints by adding a new variable (which we call $\delta$ here) representing an infinitesimal positive value.

\begin{lem}[Elimination of Strict Constraints]
    Let $\matr{a} \in \rationals^{m \times n}$, $\vect{b} \in \rationals^{m}$, $\matr{a}' \in \rationals^{k\times n}$ and $\vect{b}' \in \rationals^k$ for some $m,k,n \in \nats$.
    Then
    \[
        \left(\sys{a}{b} \land \bigwedge_{i=1}^k (\rvect{a'}{i}\vect{x} < b'_i)\right)\quad \equiv\quad \exists \delta. \left(\sys{a}{b} \land \bigwedge_{i=1}^k (\rvect{a'}{i}\vect{x} + \delta \leq b'_i) \land (\delta > 0)\right). \tag*{\qed}
    \]
\end{lem}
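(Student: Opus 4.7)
The plan is to prove the equivalence by showing both directions of implication for a fixed assignment to the variables of $\vect{x}$, treating $\delta$ as existentially quantified on the right-hand side. Since the weak system $\sys{a}{b}$ appears on both sides and does not involve $\delta$, I can focus entirely on the strict part $\bigwedge_{i=1}^k (\rvect{a'}{i}\vect{x} < b'_i)$ versus $\exists\delta.\, (\delta > 0) \land \bigwedge_{i=1}^k (\rvect{a'}{i}\vect{x} + \delta \leq b'_i)$.

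For the $\Rightarrow$ direction, I would fix an assignment $\alpha$ to $\vect{x}$ satisfying the left-hand side. For each $i \in [k]$, strictness gives $b'_i - \alpha(\rvect{a'}{i}\vect{x}) > 0$. Define
\[
    \delta_0 := \min\{\, b'_i - \alpha(\rvect{a'}{i}\vect{x}) \mid i \in [k]\,\}
\]
if $k \geq 1$, and pick any $\delta_0 > 0$ if $k = 0$. Since the minimum of finitely many positive reals is positive, $\delta_0 > 0$. The extension $\alpha[\delta \mapsto \delta_0]$ then satisfies each weak constraint $\rvect{a'}{i}\vect{x} + \delta \leq b'_i$ by construction, and $\sys{a}{b}$ is preserved since it does not mention $\delta$.

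For the $\Leftarrow$ direction, suppose $\alpha$ satisfies the right-hand side with $\alpha(\delta) = \delta_0$. Then $\delta_0 > 0$, and for each $i$ we have $\alpha(\rvect{a'}{i}\vect{x}) + \delta_0 \leq b'_i$, hence $\alpha(\rvect{a'}{i}\vect{x}) \leq b'_i - \delta_0 < b'_i$, so $\alpha$ restricted to $\vect{x}$ satisfies the strict constraints. Again $\sys{a}{b}$ is untouched.

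There is no real obstacle here; the only minor subtlety is the degenerate case $k = 0$, where the minimum is over an empty set and one must explicitly pick an arbitrary positive $\delta_0$ to witness the existential. Everything else follows directly from the density of the reals and the elementary observation that a finite set of strict inequalities can be uniformly tightened by a positive slack.
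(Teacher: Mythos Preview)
Your proof is correct. The paper itself does not give a proof of this lemma; it is stated with a terminal \qed\ and attributed to the references \cite{nalbach2021extending} and \cite{king_effective_2014} as a well-known transformation. Your argument is the standard elementary one, and there is nothing to add.
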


The symbol $\delta$ is treated as a variable, but we do not eliminate it in the \texttt{FMplex} algorithm.
Hence, the single remaining strict constraint $\delta > 0$ does not pose any problems.
When \texttt{FMplex} reaches a leaf, all other variables have been eliminated, and only constant bounds on $\delta$ and trivial constraints are left, making it easy to decide whether that leaf is satisfiable.
If it is satisfiable, we can choose any value for $\delta$ within the bounds and construct the rest of the model as before.

We can deduce \unsat~ from global conflicts as before, but we can now also count constraints equivalent to $(\delta \leq 0)$ as global conflicts if they are constructed as a non-negative linear combination of the input system, because they contradict the assertion $(\delta > 0)$.

\section{Relation to Other Methods}
\label{sec:othermethods}

\subsection{Simplex Algorithm}
\label{sec:simplex}

The simplex method \cite{dantzig1998linear,lemke1954dual} is an algorithm for linear optimization over the reals and is able to solve \emph{linear programs}.
The \emph{general simplex} \cite{dutertre2006integrating} is an adaption for checking the satisfiability of systems of linear constraints.
We illustrate its idea for the weak case.

Remind that given a system $\sys{a}{b}$ with $m$ rows, by the fundamental theorem of linear programming (\Cref{thm:linearprogramming}), $\sys{a}{b}$ is satisfiable if and only if there exists some maximal subset $\mathcal{N} \subseteq [m]$ such that $\submatr{a}{\mathcal{N}}$ is linearly independent and $\sys{a}{b} \cup \submatr{a}{\mathcal{N}}\vect{x}=\subvect{b}{\mathcal{N}}$ is satisfiable - the latter can be checked algorithmically using Gaussian elimination, resulting in a system where each variable is replaced by bounds induced by the rows $\mathcal{N}$.
This system along with the information which element in $\mathcal{N}$ was used to eliminate which variable is called a \emph{tableau}.
The idea of the simplex method is to do a local search on the set $\mathcal{N}$ (called \emph{non-basis}).
That is, we replace some $i \in \mathcal{N}$ (\emph{leaving variable}) by some $i' \in [m] \setminus \mathcal{N}$ (\emph{entering variable}) obtaining $\mathcal{N}' := \mathcal{N} \cup \{ i' \} \setminus \{ i \}$ such that $\submatr{a}{\mathcal{N}'}$ is still linearly independent.
The clou is that the tableau representing $(\sys{a}{b}) \wedge (\subsyseq{a}{b}{\mathcal{N}})$ can be efficiently transformed into $(\sys{a}{b}) \wedge (\subsyseq{a}{b}{\mathcal{N}'})$ (called \emph{pivot operation}), and progress of the local search can be achieved by the choice of $i$ and $i'$.
These local search steps are performed until a satisfying solution is found, or a conflict is found.
These conflicts are detected using Farkas' Lemma (\Cref{thm:farkas}), i.e. a row in the tableau induces a trivially false constraint and is a positive linear combination of some input rows.

As suggested by \Cref{thm:basis-nonbasis-correspondence}, there is a strong correspondence between a tableau of the simplex algorithm and the intermediate systems constructed in FMplex.
More precisely, if a non-basis of a simplex tableau is equal to the non-basis of a leaf system of \Cref{algo:base}, then the tableau is satisfiable if and only if the FMplex system is satisfiable.
In fact, we could use the same data structure to represent the algorithmic states.
Comparing the two algorithms structurally, FMplex explores the search space in a tree-like structure using backtracking, while simplex can jump between neighbouring leaves directly.

The idea for \refalgobounds that excludes visiting the same non-basis in fact arose from the analogies between the two methods.
Further, we observe a potential advantage of FMplex: Simplex has more non-bases reachable from a given initial state than the leaves of the search tree of FMplex, as FMplex needs only to explore all lower or all upper bounds of a variable while simplex does local improvements blindly.
Heuristically, simplex cuts off large parts of its search space and we expect it often visits fewer non-bases than FMplex - however, as the pruning done by FMplex is by construction of the algorithm, we believe that there might be combinatorially hard instances on which it is more efficient than simplex.

\subsection{Virtual Substitution Method}

\emph{Virtual substitution} \cite{loosApplyingLinearQuantifier1993,weispfenning1997quantifier} admits  quantifier elimination for real arithmetic formulas.
Here, we consider its application on existentially quantified conjunctions of linear constraints.

The underlying observation is that the satisfaction of a formula changes at the zeros of its constraints and is invariant between the zeros.
Thus, the idea is to collect all \emph{symbolic zeros} $\text{zeros}(\varphi)$ of all constraints in some input formula $\varphi$.
If all these constraints are weak, then a variable $x_j$ is eliminated by plugging every zero and an arbitrarily small value $-\infty$ into the formula, i.e. $\exists x_j.\; \varphi$ is equivalent to $\varphi[-\infty\vs x_j] \vee \bigvee_{\xi \in \text{zeros}(\varphi)} \varphi[\xi\vs x_j]$.
The formula $\varphi[t\vs x_j]$ encodes the semantics of substituting the term $t$ for $x_j$ into the formula $\varphi$ (which is a disjunction of conjunctions).
As we can pull existential quantifiers into disjunctions, we can iteratively eliminate multiple variables by handling each case separately.

The resulting algorithm for quantifier elimination is singly exponential; further optimizations (\cite{nipkowLinearQuantifierElimination2008a} even proposes to consider only lower or upper bounds for the test candidates) lead to a procedure very similar to the FMplex quantifier elimination: Substituting a test candidate into the formula is equivalent to computing the restricted projection w.r.t. a variable bound.
However, our presentation allows to exploit the correspondence with the FM method.
Moreover, we handle the special case that there are only lower (resp. upper) bounds separately, while the virtual substitution handles it by always computing $\varphi[-\infty\vs x_j]$.

Virtual substitution can also be adapted for SMT solving \cite{vssmt} to a depth-first search similar to FMplex.
A conflict-driven search for virtual substitution on conjunctions of weak linear constraints has been introduced in \cite{korovin2014towards}, which tracks intermediate constraints as linear combinations of the input constraints similarly to FMplex.
Their conflict analysis is a direct generalization of the global conflicts in FMplex and is thus slightly stronger than our notion of local conflicts.
However, their method requires storing and maintaining a lemma database, while FMplex stores all the information for pruning the search tree locally.
The approaches have strong similarities, although they originate from quite different methods.
Further, our presentation shows the similarities to simplex, is easily adaptable for strict constraints, and naturally extensible to work incrementally.

\subsection{Sample-Based Methods}
There exist several depth-first search approaches, including McMillan et al. \cite{mcmillanGeneralizing2009}, Cotton \cite{cottonNatural2010} and Korovin et al. \cite{korovinConflict2009,korovinSolving2011}, which maintain and adapt a concrete (partial) variable assignment.
They share the advantage that combinations of constraints are only computed to guide the assignment away from an encountered conflict, thus avoiding many unnecessary computations, compared to FM.

Similar to FMplex, these methods perform a search with branching, backtracking and learning from conflicting choices.
However, they branch on variable assignments, with infinitely many possible choices in each step.
Interestingly, the bounds learned from encountered conflicts implicitly partition the search space into a finite number of regions to be tried, similar to what FMplex does explicitly.
In fact, we deem it possible that \cite{korovinConflict2009} or \cite{korovinSolving2011} try and exclude assignments from exactly the same regions that FMplex would visit (even in the same order).
However, the sample-based perspective offers different possibilities for heuristic improvements than FMplex: choosing the next assigned value vs. choosing the next lower bound; deriving constant variable bounds vs. structural exploits using Farkas' Lemma; potential for rapid solutions vs. control and knowledge about the possible choices.

Moreover, these methods offer no straight-forward adaption for quantifier elimination, while FMplex does.
However, \cite{mcmillanGeneralizing2009} and \cite{cottonNatural2010} can handle not only conjunctions, but any quantifier-free LRA formula in conjunctive normal form.

\section{Experimental Evaluation}
\label{sec:experiments}
We implemented several heuristic variants of the FMplex algorithm, as well as the generalized \emph{simplex} and the \emph{FM} methods as non-incremental DPLL(T) theory backends in our SMT-RAT solver \cite{corziliusSMTRAT2015} and compared their performance in the context of satisfiability checking.
Using the transformation given in \cite{nalbach2021extending} and case splitting as in \cite{barrett2006splitting}, we extended the method to also handle strict and not-equal-constraints.

The base version of FMplex, \Cref{algo:base}, was tested with two different heuristics for the choice of the eliminated variable and for the order in which the branches are checked.
These choices may strongly influence the size of the explored search tree; in the best case, the very first path leads to a satisfiable leaf or to a global conflict.

\paragraph{Min-Fanout} We greedily minimize the number of children: for any $\sys{a}{b}$ and $I$, we choose $(x_j,V) \in \textit{branch\_choices}(\sys{a}{b},I)$ such that $|V|$ is minimal; in case that this minimum is $1$, we prefer choices with $V = \{\bot\}$ over the other choices.

We prefer rows with a lower (earlier) backtrack level, motivated by finding a global conflict through trying positive linear combinations first.
Moreover, if backtracking is used then we expect this heuristic to allow for backtracking further back on average.

\paragraph{Min-Column-Length} A state-of-the-art heuristic for simplex in the context of SMT solving is the \emph{minimum column length} \cite{soisimplex}: we choose the variables for leaving and entering the non-basis such that the number of necessary row operations is minimized.
We resemble this heuristic in FMplex as follows: we prefer choices $(x_j, \{\bot\})$ and if there is no such $j$, we take the $j \in [n]$ with minimal $|\lbs{a}{j}|+|\ubs{a}{j}|$ and take the smallest choice between $\lbs{a}{j}$ and $\ubs{a}{j}$.

We first choose the rows which have the least non-zero coefficients (i.e. contain the least variables) to prefer sparse sub-problems.
This can be understood as \emph{Min-Row-Length}.

\vspace{0.5em}
We consider the following solver variants: \texttt{FMplex-a-MFO} and \texttt{FMplex-a-MCL} implement \refalgobase (i.e. \Cref{algo:base}) with the Min-Fanout and the Min-Column-Length heuristic, respectively.
\texttt{FMplex-a-Rand-1/2} denotes two variants of \refalgobase where all choices are taken pseudo-randomly with different seeds.
\texttt{FMplex-b-MFO} implements \refalgobounds and \texttt{FMplex-c-MFO} implements \refalgobacktracking\!, both using the Min-Fanout heuristic.
Our approach is also compared to non-incremental implementations \texttt{FM} and \texttt{Simplex}.
The FMplex variants and \texttt{FM} always first employ Gaussian elimination to handle equalities.

\begin{table}[b]
    \center
    \begin{tabular}{lrrrrr|rrrrr}%
		\toprule
        {} & \multicolumn{5}{c}{SMT-LIB} & \multicolumn{5}{c}{Conjunctions}\\
		\cmidrule(lr){2-6}\cmidrule(lr){7-11}
        {} &  solved &  sat &  unsat &  TO &  MO &  solved &  sat &  unsat &  TO &  MO\\
		\midrule
        \texttt{Simplex}       &     \textbf{958} &  \textbf{527} &    \textbf{431} &      714 &      81 &    \textbf{3084} &  \textbf{777} &   \textbf{2307} &        0 &       0\\
        \texttt{FM}           &     860 &  461 &    399 &      577 &     316 &    2934 &  747 &   2187 &      107 &      43\\
        \texttt{FMplex-a-MFO}    &     814 &  432 &    382 &      840 &      99 &    2962 &  743 &   2219 &      122 &       0 \\
        \texttt{FMplex-a-MCL}    &     820 &  435 &    385 &      830 &     103 &    2965 &  742 &   2223 &      119 &       0 \\
        \texttt{FMplex-a-Rand-1} &     742 &  383 &    359 &      906 &     105 &    2806 &  668 &   2138 &      278 &       0 \\
        \texttt{FMplex-a-Rand-2} &     743 &  383 &    360 &      905 &     105 &    2823 &  671 &   2152 &      261 &       0 \\
        \texttt{FMplex-b-MFO} &     822 &  434 &    388 &      830 &     101 &    2988 &  744 &   2244 &       96 &       0 \\
        \texttt{FMplex-c-MFO} &     920 &  499 &    421 &      733 &     100 &    \textbf{3084} &  \textbf{777} &   \textbf{2307} &        0 &       0 \\
		\midrule
        \texttt{Virtual-Best} &     982 &  532 &    450 &      651 &     120 &    3084 &  777 &   2307 &        0 &       0\\
		\bottomrule
    \end{tabular}
    \caption{Number of solved instances (total and differentiated into satisfiable and unsatisfiable), timeouts (TO) and memory-outs (MO).}\label{tbl:results}
\end{table}

All solvers were tested on the SMT-LIB \cite{BarFT-SMTLIB} benchmark set for QF\_LRA containing $1753$ formulas.
As all evaluated solvers are non-incremental, we also generated conjunctions of constraints by solving each of these QF\_LRA problems using a DPLL(T) SMT solver with an \texttt{FMplex-c-MFO} theory solver backend, and extracting all conjunctions passed to it.
If the solver terminated within the time and memory limits, we sampled 10 satisfiable and 10 unsatisfiable conjunctions (or gathered all produced conjunctions if there were fewer than 10).
This amounted to 3084 (777 sat, 2307 unsat) additional benchmarks.
The experiments were conducted on identical machines with two Intel Xeon Platinum 8160 CPUs (2.1 GHz, 24 cores).
For each formula, the time and memory were limited to 10 minutes and 5 GB.

\begin{figure}[b]
    \centering
    \begin{subfigure}{0.46\textwidth}
        \includegraphics[scale=1]{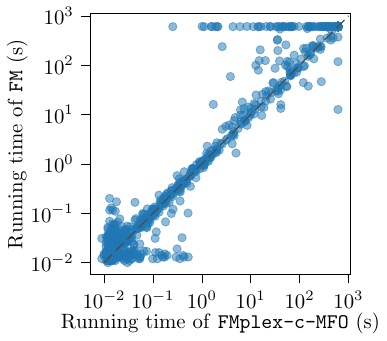}
\caption{Running times in seconds on the SMT-LIB benchmark set, \texttt{FMplex} vs \texttt{FM}.}\label{fig:runtimes-a}
    \end{subfigure}%
    \hfill
    \begin{subfigure}{0.46\textwidth}
        \includegraphics[scale=1]{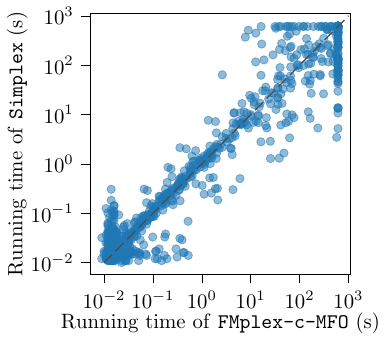}
\caption{Running times in seconds on the SMT-LIB benchmark set, \texttt{FMplex} vs \texttt{Simplex}.}\label{fig:runtimes-b}
    \end{subfigure}
    \begin{subfigure}{0.46\textwidth}
        \includegraphics[scale=1]{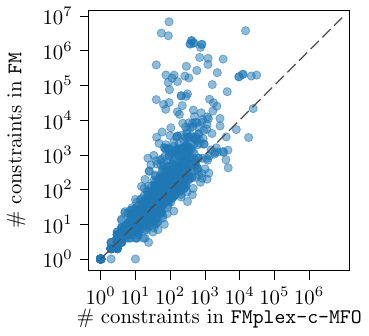}
\caption{Number of generated constraints on the conjunctive benchmark set.}\label{fig:stats-conjunctions-a}
    \end{subfigure}%
    \hfill
    \begin{subfigure}{0.46\textwidth}
        \includegraphics[scale=1]{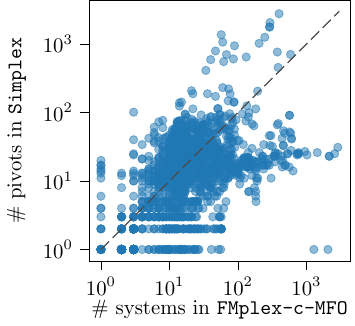}
\caption{Number of visited non-bases (intermediate systems) on the conjunctive benchmark set.}\label{fig:stats-conjunctions-b}
    \end{subfigure}
\caption{%
Scatter plots: Each dot represents a single instance.
In (A) and (B), instances at the very top or right exceeded the resource limit.
Such instances are not considered in (C) and (D).%
}\label{fig:stats-conjunctions}

\end{figure}

The results in \Cref{tbl:results} show that \texttt{Simplex} solved the most SMT-LIB instances, followed by our \texttt{FMplex-c-MFO} and then \texttt{FM}.
Interestingly, \texttt{FM} solves fewer conjunctive instances than the base version of FMplex due to higher memory consumption ($43$ memory-outs for \texttt{FM}, while the others have none).
We see that a reasonable variable heuristic makes a difference as \texttt{FMplex-a-Rand-*} perform much worse than \texttt{FMplex-a-MFO} and \texttt{FMplex-a-MCL}.
However, between the latter two, there is no significant difference.
While our first optimization used in \texttt{FMplex-b-MFO} has no big impact, the backtracking implemented in \texttt{FMplex-c-MFO} allows for solving more instances within the given resource limits.

The running times for each individual SMT-LIB instance depicted in \Cref{fig:runtimes-a} and \Cref{fig:runtimes-b} reveal that \texttt{FM} and \texttt{FMplex-c-MFO} often behave similar, but \texttt{FM} fails on a number of larger instances.
We suspect that the smaller intermediate systems of FMplex are a main factor here.
While \texttt{Simplex} is often faster than \texttt{FMplex-c-MFO} and solves 61 SMT-LIB instances not solved by \texttt{FMplex-c-MFO}, it fails to solve 23 instances on which \texttt{FMplex-c-MFO} succeeds (Of these instances, \texttt{FM} solves 3 respectively 14 instances).
Accordingly, the \texttt{Virtual-Best} of the tested solvers performs significantly better than just \texttt{Simplex}, indicating potential for a combination of \texttt{Simplex} and \texttt{FMplex-c-MFO}.

\Cref{fig:stats-conjunctions-a} compares the number of constraints generated by \texttt{FM} and \texttt{FMplex-c-MFO} on the conjunctive inputs.
Especially on larger instances, FMplex seems to be in the advantage.
Motivated by \Cref{subsec:avoiding-redundancies}, \Cref{fig:stats-conjunctions-b} compares the number of \texttt{Simplex} pivots  to the number of systems in \texttt{FMplex-c-MFO}.
We see that neither is consistently lower than the other, though \texttt{Simplex} seems to be slightly superior.
Due to the log-log scale, not shown are 1305 instances in which either measurement is 0 (920 instances for \texttt{Simplex}, 981 for \texttt{FMplex-c-MFO}).

\noindent The implementation and collected data are available at \url{https://doi.org/10.5281/zenodo.7755862}.

\section{Conclusion}
\label{sec:conclusion}

We introduced a novel method \emph{FMplex} for quantifier elimination and satisfiability checking for conjunctions of linear real arithmetic constraints.
Structural observations based on Farkas' Lemma and the Fundamental Theorem of Linear Programming allowed us to prune the elimination or the search tree.
Although the new method is rooted in the FM method, it has strong similarities with both the virtual substitution method and the simplex method.

The experimental results in the context of SMT solving show that FMplex is faster than Fourier-Motzkin and, although simplex is able to solve more instances than FMplex, there is a good amount of instances which can be solved by FMplex but cannot be solved by simplex.
While the current state of FMplex cannot compete with established solvers using the simplex method, these solvers benefit from implementations highly optimized through decades of research.

Thus, in future work we aim to combine the structural savings of FMplex with the efficient heuristic of simplex, i.e. we transfer ideas from FMplex to simplex and vice-versa.
Furthermore, we will investigate in tweaks and heuristics.
For instance, we plan to adapt the perfect elimination ordering from \cite{li2021choosing} and work on an incremental adaption for SMT solving.

In the work \cite{PromiesA24} following up our first paper \cite{Nalbach_2023} on FMplex, we show, in the context of quantifier elimination, how the resulting disjunction computed by FMplex may be reduced to a single conjunction. This process can be done with practically no overhead, solely based on the coefficients of the linear combinations (the matrix $F$ in the algorithm). 
This overcomes the main drawback in the comparison to the Fourier-Motzkin method.
Thus, FMplex may be used as replacement for Fourier-Motzkin, and is still singly exponential and relatively straight-forward to implement.
Moreover, \cite{PromiesA24} contains an experimental evaluation in the context of quantifier elimination, and there, FMplex outperforms several established tools.
We plan to also pursue this direction further.

\bibliographystyle{alphaurl}%
\bibliography{literature.bib}

\end{document}